\newcommand{\R}{\mathcal{R}}
\newcommand{\leaveout}[1]{}
\theoremstyle{definition}
\newtheorem{prop}{Proposition}
\DeclarePairedDelimiter\floor{\lfloor}{\rfloor}
\providecommand{\algorithmname}{Algorithm}
\title{Adversarial Task Allocation}
\author{Chen Hajaj and Yevgeniy Vorobeychik\\
Electrical Engineering and Computer Science\\
Vanderbilt University\\
Nashville, TN\\
chen.hajaj,yevgeniy.vorobeychik@vanderbilt.edu}
\begin{document}
%opening
\maketitle
\begin{abstract}
The problem of allocating tasks to workers is of long standing
fundamental importance.
Examples of this include the classical problem of assigning computing
tasks to nodes in a distributed computing environment, as well as the more recent problem of crowdsourcing
where a broad array of tasks are slated to be completed by human workers.
Extensive research into this problem generally addresses important
issues such as uncertainty and, in crowdsourcing, incentives.
However, the problem of adversarial tampering with the task allocation
process has not received as much attention.

We are concerned with a particular adversarial setting in task
allocation where an attacker may target a specific worker in order to
prevent the tasks assigned to this worker from being completed.
We consider two attack models: one in which the adversary observes
only the allocation policy (which may be randomized), and the second
in which the attacker observes the actual allocation decision.
For the case when all tasks are homogeneous, we provide
polynomial-time algorithms for both settings.
When tasks are heterogeneous, however, we show the adversarial
allocation problem to be NP-Hard, and present algorithms for solving
it when the defender is restricted to assign only a single worker per
task.
Our experiments show, surprisingly, that the difference between the
two attack models is minimal: deterministic allocation can achieve
nearly as much utility as randomized.

%This paper deals with task allocation in adversarial settings (e.g., tasks to processors, tasks to workers and projects to robots). We show that under the common availability attack, a task allocator should intelligently allocate the tasks in a way that will maximize her expected utility.For this allocation process, we provide with two strategies: a randomized and a deterministic one. For the common case of homogeneous tasks,  we provide with a polynomial-time algorithm for the optimal allocation. Interestingly, we show empirically that both algorithms only assign a small subset of the available workers. Furthermore, we compare the strategies expected outcomes and show a minimal difference in their resulted outcome for the defender even though the deterministic strategy is a special case of the randomized one. Finally, we formalize the optimization problem for heterogeneous tasks and provide with a relaxed comparison of the two strategies outcome for this case as well.
\end{abstract}

\section{Introduction}

The problem of allocating a set of tasks among a collection of workers has been a fundamental research question in a broad array of domains, including
%Division of tasks between different entities is a research question that continuously attracts researchers from different disciplines such as 
distributed computing, robotics, and, recently, crowdsourcing~\cite{alistarh2012allocate,stone1999task,liu2017sequential}.
% In this paper, we focus on task allocation in crowdsourcing environments as a whole and ones that are populated with adversarial entities in particular. 
%However, there has been relatively limited attention on the challenges faced when some of the workers are compromised after the tasks have been assigned to them.
Despite the extensive interest in the problem, however, there is little prior work on task allocation in settings where workers may be attacked, and their ability to successfully complete the assigned task compromised as a consequence.
%, after the tasks have been assigned.
%Our goal is to investigate a particular class of adversarial task allocation problems in which a single rational attacker targets a single worker \emph{after} tasks have already been assigned to this worker.
Such \emph{adversarial task allocation} problems can arise, for example, when tasks are of high economic or political consequence, such as when we use crowdsourcing to determine which executables are malicious or benign, or which news stories constitute fake news.

We investigate the adversarial task allocation problem in which a rational attacker targets a single worker after tasks have already been assigned.
% to all workers.
We consider two models of information available to the attacker at the time of the attack: \emph{partial information}, where the attacker only knows the defender's policy (common in Stackelberg security games, for example), and \emph{full information}, where the attacker observes the actual task assignment decision.
We formalize the interaction between the attacker and requester (defender) as a Stackelberg game in which the defender first chooses an allocation policy, and the attacker subsequently attacks a single worker so as to maximize the defender's losses from the attack.
We seek a strong Stackelberg equilibrium (SSE) of this game.
In the partial information setting, we study how to compute an optimal randomized assignment in an SSE for the requester, while in the full information model we focus on computing an optimal deterministic assignment.
%In the former setting, we study how to compute an optimal randomized assignment, while in the latter our focus is on optimal deterministic assignment, in both cases as a part of a Stackelberg equilibrium in which the defender first chooses an allocation policy (randomized or deterministic, respectively), and the attacker subsequently attacks a single worker so as to maximize the defender's losses from the attack.

We first consider a homogeneous task setting, where all tasks have the same utility.
In this case, we show that an optimal randomized task assignment policy can be computed in linear time.
%we exhibit an algorithm for computing an optimal randomized task allocation policy in linear time.
Deterministic assignment is harder, and the algorithm we devise for that case only runs in pseudo-polynomial time (linear in the number of tasks, and quadratic in the number of workers).
While randomized policies in Stackelberg games are always advantageous for the defender (if they can be used), our experiments show that the difference between optimal deterministic and randomized policies is small, and shrinks as we increase the number of tasks, suggesting that deterministic policy may be a good option especially when we are uncertain about what the attacker can observe.

Next, we turn to heterogeneous tasks settings.
This case, it turns out, is considerably more challenging.
Nevertheless, if we impose a restriction that only a single worker can be assigned to a task (optimal when tasks are homogeneous, but not in general), we can still compute an optimal randomized assignment in linear time.
Optimal deterministic assignment is much harder even with this restriction in place, and we propose an integer programming approach for solving it.

\paragraph{Related Work}

The problem of task allocation in adversarial settings has been considered from several perspectives.
%The question of task allocation as a whole and adversarial task allocation in particular attracts a lot of attention in recent years. 
One major stream of literature is about robots acting in adversarial environments \cite{alighanbari2005cooperative,jones2006dynamically}.
Alighanbari and How~\cite{alighanbari2005cooperative} consider assigning weapons to targets, somewhat analogous to our problem, but do not model the decision of the adversary; their model also has rather different semantics than ours.
Robotic soccer is another common adversarial planning problem, although the focus is typically on coordination among multiple robots when two opposing teams are engaged in coordination and planning~\cite{jones2006dynamically}.
%In particular, a rising number of works considered robots in both cooperative and coordinated environments \cite{alighanbari2005cooperative,jones2006dynamically}. 
%While the assignment procedure in this case is similar to ours, the model which includes multiple defenders, defenders using mutual knowledge and the environment being modeled as adversarial rather than an actual entity makes this line of works very different than ours. 
%Alighanbari and How~(\citeyear{alighanbari2005cooperative}) look at  assignment of unmanned aerial vehicles in cooperative settings where an adversary in the environment is actively attempting to cause a failure. While the assignment problem is similar to our case (weapons to targets vs. tasks to workers), this work and it model are very different than ours. For example, we consider a single defender while this work consider multiple defenders who benefit from a collaborative knowledge. 

Another major literature stream which considers adversarial issues is crowdsourcing.
One class of problems is the issue of individual worker incentives in truthfully responding to questions~\cite{singla2013truthful}, or in the amount of effort they devote to the task~\cite{tran2014budgetfix,liu2017sequential}, rather than adversarial reasoning per se.
Another, more directly adversarial, considers situations where some workers simply answer questions in an adversarial way~\cite{Ghosh11,Steinhardt16}.
However, the primary interest in this work is robust estimation when tasks are assigned randomly or exogenously, rather than task assignment itself.
Similarly, prior research on machine learning when a portion of data is adversarially poisoned~\cite{chen-icml11,xu2010,feng2014robust,chen2013robust} focuses primarily on the robust estimation problem, and not task allocation; in addition, it does not take advantage of structure in the data acquisition process, where workers, rather than individual data points, are attacked.

%In the crowdsourcing domain, works that target task allocation in adversarial settings also focused on adversarial models (e.g., worst-case arrival order for online algorithms \cite{tong2016online} and their value of service~\cite{singla2013truthful}). 
%Still, none of these works consider the adversary as an active entity as we do. 
%Even works that do consider the adversary as a player in the game, focus on ``spammers'', workers who submit arbitrary responses for the promised payments \cite{tran2014budgetfix,liu2017sequential}. While these workers do sabotage the task allocator's goal of resulting with the correct answer for each task, this kind of adversary is a side effect of the benefit-maximizing strategy of these workers. 
%In this paper, we consider a different type of adversaries, ones that willingly target the task's requester and aim to minimize her expected utility once they observe the defender's task allocation. This kind of games is widely known as a \emph{ Stackelberg security games} (SSGs). 

Our work has a strong connection to the literature on Stackelberg security games~\cite{conitzer2006computing,korzhyk2010complexity,tambe2011security}.
However, the mathematical structure of our problem is different: for example, we have no protection resources to allocate, and instead the defender's decision is about allocating tasks to potentially untrusted workers.
%In SSGs, the defender commits to a (mixed) strategy, which is observed by the adversary. In its turn, the adversary attacks one of the targets~\cite{conitzer2006computing,korzhyk2010complexity,tambe2011security}. While some works consider an attacker that can fully access the defender’s correlated random allocation~\cite{rabinovich2015information,xu2015exploring}, they allow interaction in terms of information disclosure to persuade the attacker to take the desired
%action which is not available in our case.

%TODO: survey literature specifically dealing with adversarial task allocation / assignment / crowdsourcing here.  Also, need to have a paragraph on Stackelberg security games

\section{Model}
\label{sec:model}

%\subsection{Environment}

Consider an environment populated with a single requester (hereafter denoted ``\textit{defender}''), a set of $n$ workers, $W$, a set of $m$ binary labeling tasks, $T$, and an \emph{adversary}. 
Each worker $w \in W$ is characterized by an individual proficiency, or the probability of successfully completing a binary labeling task, denoted $p_{w}$, and assume that $p_w \ge 0.5$ for all workers (otherwise, we can always flip the received labels).
In our setting, these proficiencies are known to the defender.\footnote{The issue of learning such proficiencies from experience has itself been extensively studied~\cite{sheng2008get,dai2011artificial,manino2016efficiency}.}
Further, we assume that $m$ is sufficiently small that any worker can complete all tasks.
%\footnote{We assume that the lowest proficiency is a random guess. Hence, $ p_i \geq 0.5 , \forall w_i \in W $.}  
For exposition purposes, we index the workers by integers $i$ in decreasing order of their proficiency, so that $P=\{p_1,\ldots,p_n\} $ s.t. $ p_{i}\geq p_{j}$ $\forall i<j$, where the set of k most proficient workers is defined as: $ W^k $.
Each task $t \in T$ is associated with a utility $u_{t}$ that the defender obtains if this task is completed correctly.
We assume that if the task is not completed, or incorrect, the defender obtains zero utility from it.
Let $\iota_t$ be the (unknown) correct label corresponding to a task $t$.
%we align the workers’ proficiencies with their index (i.e., $P=\{p_1,\cdots,p_n\} $ s.t. $ p_{i}\geq p_{j}$ $\forall i>j$). The defender has access to a set of $m$ binary labeling tasks, 
%$T=%\{T_1,\cdots,T_n\}=
%\{t_{1},\cdots,t_{m}\}$, each associated with some utility for the defender ($ u_{t_i} $) and can be assigned (separately or as bundles) to one or more workers. %Where $ T_i $ is the set of tasks assigned to $ w_i $ and $ t_1 $ to $ t_m $ are the available task.
%

%The defender has a limited budget and can assign at most $B$ tasks to any workers.
%The defender has a limited budget, denoted $ B $, that is sufficient to $ m $ assignments of worker to task.  
The defender's fundamental decision is the \emph{assignment} of tasks to workers.
Formally, an assignment $s$ specifies a subset of tasks $T'(s)$ and the set of workers, $W_t(s)$ assigned to each task $t \in T'(s)$.
%, which we can also encode using a matrix $A$ to encode the assignment with $a_{ij} = 1$ if task $i$ is allocated to worker $j$.
%Let $W_t$ be a set of workers assigned to a task $t$, and 
Let $L_t(s)$ denote the labels returned by workers in $W_t(s)$ for $t$.
Suppose that the defender faces a budget constraint, of assigning $m$ tasks; thus, each task can be assigned to a single worker, or a subset of tasks assigned to multiple workers.\footnote{If there are more tasks than budget, we can simply take the $m$ tasks with the highest utility.}
%Given the set of labels $ L $ submitted by workers $ W'\subseteq W$ for a particular task $t\in T$, 
Then the defender determines the final label to assign to $t$ according to some deterministic mapping $ \delta: L_t \rightarrow l $ (e.g., majority label), such that $ L\in \{0,1\}^{|W_t(s)|} $ and $ l\in\{0,1\} $.
We assume that whenever a single worker $w$ is a assigned to a task and returns a label $l_w$, $\delta(l_w) = l_w$.
The defender's expected utility when assigning a set of tasks $T'(s)$ to workers and obtaining the labels is then
\begin{equation}
u_{def}(s)=\sum_{t\in T'(s)} u_{t} \mathbb{E}[\mathbb{I}\{\delta(L_t(s)) = \iota_t\}],
%\cdot (1-\delta(L') \xor \iota),
\label{eq:basic}
\end{equation}
where $\mathbb{I}\{\cdot\}$ is an indicator function and the expectation is with respect to labeler proficiencies (and resulting stochastic realizations of labels).

It is immediate that in our setting, if there is no adversary, all tasks should be assigned to the worker with the highest $p_w$.
Our focus, however, is how to optimally assign workers to tasks when there is an intelligent adversary who could subsequently (to the assignment) attack one of the workers.
In particular, we assume that there is an adversary (attacker) with the goal of minimizing the defender's utility $u_{def}$; thus, the game is zero sum.
To this end, the attacker chooses a single worker to attack, for example, by deploying a cyber attack against the corresponding compute node, or against the device on which the human worker performs the tasks assigned to them.
We encode the attacker's strategy by a vector $\alpha$ where $\alpha_w = 1$ iff a worker $w$ is attacked (and $\sum_w \alpha_w = 1$ since exactly one is attacked).
The attack takes place \emph{after} the tasks have already been assigned to workers.

We distinguish between two forms of adversary's \emph{knowledge} about worker-to-task assignment before deploying the attack: 1) \emph{partial knowledge}, when the adversary only knows the defender's policy (which may be deterministic or randomized), and 2) \emph{complete knowledge}, when the attacker knows the actual assignments of tasks to workers.
The specific consequences of the attack---denial of service, where the targeted node is taken offline and cannot communicate the labels to the defender, or integrity attack, where incorrect labels are reported---are immaterial in our model, since the defender receives zero utility from the tasks assigned to the attacked worker in either case.

%\subsection{Game}

%Availability of a worker's response is the most basic requirement in any network, and especially in a crowdsourcing environment~\cite{yau2003security}. Availability attacks (also known as ``denial of service'' attacks) are known to degrade a communication link by slowing it with additional traffic or disabling it completely~\cite{jajodia1999surviving}. In our case, the attacker degrades any communication between a single worker and the defender completely, hence prohibitive her from getting the set of labels submitted by this worker.
%Naturally, in contrast to the case with no attacker, the requester will not benefit from allocating the entire set of tasks to the same worker $ w_n $ as by making this latter assignment, $ w_n $ will become the attacker's preferred target and the defender's utility is promised to be $u_{def} = 0$. Thus, the defender has to intelligently distribute these tasks among the different workers in a way that maximizes her expected utility given this attack. 
Clearly, when an attacker is present, the policy of allocating all tasks to the most competent worker (or any other) will yield zero utility for the defender.
The challenge of how to split the tasks up among workers, trading off quality with robustness to attacks, becomes decidedly non-trivial.
Our goal is to address this challenge for both models of adversarial knowledge, computing an optimal randomized assignment (i.e., a probability distribution $q$ over assignments $s$) in the partial knowledge environment, and an optimal deterministic assignment $s$ in the complete knowledge setting.
Formally, we aim to compute a strong Stackelberg equilibrium of the game between the requester (leader), who chooses a task-to-worker assignment policy, and the attacker (follower), who attacks a single worker~\cite{stackelberg1952theory}.

\section{Homogeneous tasks}

We start by considering tasks which are \emph{homogeneous}, that is, $u_t = u_{t'}$ for any two tasks $t,t'$.
Without loss of generality, suppose that all $u_t = 1$.
%Next, we provide with important insights on how the defender should form her assignment for both strategies.% while in the following two sections, we focus on two concrete assignment strategies.
Note that it is immediate that we never wish to waste budget, since assigning a worker always results in non-negative marginal utility.
Next we consider the problem of optimal randomized assignment when the attacker only knows the (randomized) policy, and optimal deterministic assignment (when the attacker observes the actual assignment), showing that both can be solved efficiently.

%In many tasks allocation processes, the population of tasks is quite homogeneous in terms of their utility distribution. For example, given a dataset of pictures that needs to be labeled into one out of two objects, or a long text that is broken up to multiple paragraphs for proofreading. In this case, in contrast to the heterogeneous case, as will be demonstrated in the following section, an optimal polynomial-time algorithm exists for both strategies.

\subsection{Randomized strategy} 

In general, a randomized allocation involves a probability distribution over all possible matchings with cardinality $m$ between tasks and workers.
We first observe that this space can be narrowed to consider only matchings in which one worker is assigned to any task.
\begin{prop}\label{prop:oneEach} 
Suppose that tasks are homogeneous. 
There exists a Stackelberg equilibrium in which the defender commits to a randomized strategy with all assignments in the support assigning at most one worker per task.
%For any mixed strategy $R$ of the attacker and any assignment $s$, there is a weakly utility-improving assignment $s'$ for the requester which assigns each task to a single worker.
%	From the defender's point of view, if all tasks are homogeneous, assigning one worker per task (weakly) dominates (i.e., result in equal or higher overall expected utility) assigning multiple workers per task.%The defender will not increase her expected-utility by assigning multiple workers to each task
\end{prop}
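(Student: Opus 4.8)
The plan is to argue by a pointwise (attacker-action-by-attacker-action) domination: starting from an arbitrary mixed strategy, I replace every pure assignment in its support that puts two or more workers on some task by a ``de-duplicated'' assignment that is weakly better against every single attack, and hence against the attacker's best response. Since the game is zero-sum, the defender's Stackelberg value equals $\max_q \min_w \sum_s q(s)\,U(s,w)$, where $U(s,w)$ is the defender's payoff under pure assignment $s$ when worker $w$ is attacked; recall that attacking $w$ zeroes out every task whose labeler set contains $w$. It therefore suffices to show the de-duplication step never decreases $U(\cdot,w)$ for any fixed $w$, since then the pushforward of an optimal $q$ attains at least the optimal value while living in the restricted class.

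First I would define a single transformation. Take a pure assignment $s$ and a task $t$ with labeler set $S=W_t(s)$, $|S|=k\ge 2$; let $a\in S$ be arbitrary and $b_1,\dots,b_{k-1}$ the remaining workers of $S$. A counting argument shows there are at least $k-1$ empty tasks: writing $A$ for the number of nonempty tasks and $B=\sum_\tau |W_\tau(s)|\le m$ for the budget used, we have $m-A \ge B-A = \sum_{\tau}(|W_\tau(s)|-1)^+ \ge k-1$, the last step because task $t$ alone contributes $k-1$. I then form $s'$ by keeping $a$ on $t$, deleting $b_1,\dots,b_{k-1}$ from $t$, and assigning each $b_j$ to a distinct empty task $t_j$. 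This strictly lowers the total redundancy $\sum_\tau(|W_\tau|-1)^+$ and creates no new multi-worker task, so iterating terminates in an assignment using at most one worker per task.

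The crux is the pointwise inequality $U(s',w)\ge U(s,w)$ for every attacked worker $w$. Only the tasks $t,t_1,\dots,t_{k-1}$ differ between $s$ and $s'$, so I compare only their joint contribution. Writing $\mathrm{val}(S)\le 1$ for the (aggregation-rule-dependent) no-attack success probability of a task with labeler set $S$, this contribution in $s$ is $\mathrm{val}(S)$ when $w\notin S$ and $0$ when $w\in S$, whereas in $s'$ it is $\sum_{x\in S}p_x - p_w\,\mathbb{I}\{w\in S\}$. When $w\notin S$ the comparison reduces to $\sum_{x\in S}p_x\ge \mathrm{val}(S)$, which holds because $|S|\ge 2$ and $p_x\ge \tfrac12$ force $\sum_{x\in S}p_x\ge 1\ge \mathrm{val}(S)$; when $w\in S$ it reduces to $\sum_{x\in S\setminus\{w\}}p_x\ge 0$, trivially true. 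Notably this uses nothing about the mapping $\delta$ beyond $\mathrm{val}\le 1$, so it is robust to the choice of aggregation rule.

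Finally I would lift pure-to-pure domination to mixed strategies. Given an optimal $q^*$ of value $V$, let $q'$ push each support point $s$ to its fully de-duplicated image $\hat s$. For every $w$, $\sum_s q^*(s)\,U(\hat s,w)\ge \sum_s q^*(s)\,U(s,w)\ge V$, so the attacker's best response against $q'$ still yields at least $V$; hence $q'$ is optimal and is supported on one-worker-per-task assignments, giving the claimed equilibrium. The main obstacle is getting the pointwise domination exactly right --- in particular recognizing that redundancy is strictly wasteful here because an attacked worker destroys every task it touches (so redundant tasks are \emph{more} fragile) while spreading the workers out harvests $\sum_{x\in S}p_x\ge 1$ expected successes rather than a single task's $\mathrm{val}(S)\le 1$ --- together with verifying the empty-task count so that the freed workers can always be rehoused.
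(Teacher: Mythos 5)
Your proposal is correct, and its combinatorial core is exactly the paper's: both proofs rehouse the $k-1$ redundant workers of a multiply-assigned task onto unassigned tasks (which exist by the budget count) and compare the single task's success probability, at most $1$, with the harvested $\sum_{x\in S} p_x \ge k/2 \ge 1$, using $p_x \ge 1/2$. Where you genuinely differ is in how the pure-strategy improvement is lifted to the randomized Stackelberg setting. The paper invokes the equivalence of Stackelberg and Nash equilibria in zero-sum games, fixes a Nash equilibrium of the restricted game, posits a strictly better unrestricted assignment $s$, and derives a contradiction by averaging the improvement over the attacker's equilibrium mixture $R$. You instead write the Stackelberg value directly as $\max_q \min_w \sum_s q(s)\,U(s,w)$, prove the pointwise domination $U(s',w) \ge U(s,w)$ for \emph{every} pure attack $w$, and push the optimal mixture forward through the de-duplication map. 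Your route is more elementary and self-contained (no equilibrium-equivalence citation needed), and it is tighter in two places the paper glosses over: you establish the existence of $k-1$ empty tasks via the redundancy count $m - A \ge \sum_\tau (|W_\tau(s)|-1)^+$ rather than asserting it, and your termination argument (strict decrease of total redundancy) handles assignments with several multi-worker tasks, which the paper's single-task contradiction treats only implicitly. You also state explicitly the model assumption both arguments need --- that an attack zeroes every task whose labeler set contains the attacked worker --- which the paper's ``w.l.o.g.\ suppose none of the workers are attacked'' step leaves tacit. One small caveat: like the paper, you use task homogeneity silently when comparing contributions as bare probabilities (i.e., normalizing $u_t \equiv 1$); this is worth flagging, since the proposition genuinely fails for heterogeneous utilities, as the paper's own two-task example shows.
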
 
\begin{proof} 
Consider an optimal randomized strategy commitment restricted to assign at most one worker per task, and the associated Nash equilibrium (which exists, by equivalence of Stackelberg and Nash in zero-sum games~\cite{Korzhyk11}).
We now show that this remains an equilibrium even in the unrestricted space of assignments for the defender.

We prove by contradiction.
Suppose that there is $s$ which assigns multiple workers for some tasks and is strictly better for the defender.
Consider an arbitrary attack $\alpha$ in the support of $R$.
%This is clear if $s$ already assigns one worker per task.
Given $s$, suppose that there is some task $t$ assigned to $k \ge 2$ workers.
Since only $m$ assignments can be made, there must be $k-1$ tasks which are not assigned.
If any of these workers is attacked, then moving this worker to another task will not change the defender's utility.
Thus, w.l.o.g., suppose none of the workers are attacked, and consider moving $k-1$ of these to unassigned tasks; let this be $s'$.
Under $s$, the marginal utility of the $k$ workers completing their assigned task $t$ is at most $u_t$.
Under $s'$, the marginal utility of these workers is $\sum_{i=1}^k p_iu_t \ge u_t$, since $p_i \ge 0.5$.
Thus, $s'$ is weakly improving.
%	This proof is by contradiction.  Assume that there exists a homogeneous setting in which assigning the same task to $k$ workers, such that $k>1$, will improve the defender's expected-utility. Note that since the defender is limited to $ m $ assignments, such assignment will leave $ k-1 $ tasks unassigned. Furthermore, regardless of the number of workers assigned, the defender's expected-utility from any task is upper-bounded by $ u_t  $. If the adversary did not attack any of the $ k $ assigned workers, since $ \forall i $,  $ p_i \geq 0.5 $, the defender would benefit from assigning each worker to a different task, resulting in an expected-utility of  $\sum_{i=1}^{k} p_{i}\cdot u_t \geq u_t$ which is higher than $ u_t $ in contradiction to the assumption.
%	Otherwise, In a case that one of the workers was attacked, if the defender assigned $ 2 $ workers to this task, she is indifferent compared to a separate assignment (i.e., still only one worker will provide her feedback to a single task). Still, the case of assigning $ k>2 $ workers (out of which, one is attacked) to this task, will result in a utility of  that is at most $ u_t  $ compared to a utility of at least $0.5\cdot u_t\cdot (k-1) \geq u_t $ when assigning the workers to different tasks. Hence, for this case as well, we contradicted the assumption.
Since this argument holds for an arbitrary $\alpha$ in the support of $R$, the resulting $s'$ must also be weekly improving given $R$.
Since $s$ is a strict improvement on the original Nash equilibrium strategy of the defender, then $s'$ must be as well, which means that this could not have been a Nash equilibrium, leading to a contradiction.
The result then follows from the known equivalence between Nash and Stackelberg equilibria in zero-sum games.
%The consequence of this proposition is that
\end{proof}
As a consequence of this proposition, it suffices to consider assignment policies (randomized or deterministic) in which each task is assigned to a single worker, and all $m$ tasks are assigned.
Since there are $m$ tasks, an assignment is then the split of these among the workers.
Consider the unit simplex in $\R^n$, $\Delta = \{x|\sum_w x_w = 1\}$ which represents how we split up tasks among workers.
It is then sufficient to consider the space of assignments $S$ where $x \in \Delta$ means that each worker receives $s_w = mx_w$ tasks, with the constraint that all $mx_w$ are integers; i.e., $S = \{mx|x \in \Delta, mx \in \mathbb{Z}_+\}$.

A randomized allocation, in general, is a probability distribution $q$ over the set of assignments $S$.
In principle, considering the problem of computing an optimal randomized task allocation is daunting: even for only $14$ workers and $14$ tasks there are over 20 million possible assignments in $S$.
We now observe that in fact we can restrict attention to a far more restricted space of \emph{unit assignments}, $\tilde{S} = m\{e_w\}_{w \in W} \subset S$, where $e_w$ is a unit vector which is $1$ in $w$th position and $0$ elsewhere; i.e., assigning a single worker to all tasks.
Let $\lambda$ denote a distribution over $\tilde{S}$.

%Before we attempt to solve the optimal policy problem, we turn to discuss its complexity. Extracting all possible allocations is a computationally hard problem by itself (e.g., for $ 14 $ workers and $ 14 $ tasks, there are 20,058,300 possible allocations). % under constraints~\ref{cons:sumTo1} and \ref{cons:allTasks}). 
%To reduce this complexity, we begin by focusing on the number of tasks that should be assigned to each worker.
\begin{prop}\label{prop:rep}
  For any distribution over assignments $q$ and attack strategy $\alpha$, there exists a distribution over $\tilde{S}$, $\lambda$, which results in the same utility.
%	For any distribution over assignments and any attack strategy, there exists a distribution over unit assignments (i.e., the set $ X = \{x_1,\ldots,x_n\} $ that consists of all the assignments in which a single worker is assigned with all $ m $ tasks), and the set of their respective assignment probabilities $ \Lambda $, that results in the same utility. %can represent any possible allocation $ s $ (in which each worker $ w_i \in W $ is assigned with $ \tau_i $.) and this assignment probability is $ \lambda_s $.
	%Any allocation $ s_i $ that assigns more than one worker with tasks, is fully equivalent to a linear combination of individual allocations (i.e., ones in which one worker is assigned with all tasks with some probability), in term of the defender's expected utility under availability attack.
\end{prop}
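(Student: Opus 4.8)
The plan is to exploit the linearity of the defender's utility in the task-load vector. Since Proposition~\ref{prop:oneEach} lets us restrict attention to assignments that put a single worker on each task, a task handed to worker $w$ is labeled correctly with probability exactly $p_w$, and (with $u_t=1$) an assignment $s$ giving worker $w$ a total of $s_w$ tasks yields, against a fixed attack $\alpha$, defender utility $u_{def}(s,\alpha)=\sum_{w}(1-\alpha_w)\,s_w\,p_w$. The key point is that this is a \emph{linear} function of $s=(s_1,\dots,s_n)$.

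First I would summarize the distribution $q$ by its vector of expected task loads, $\bar s_w=\sum_{s\in S}q(s)\,s_w$. By linearity of $u_{def}$ together with linearity of expectation, the defender's expected utility under $(q,\alpha)$ equals $u_{def}(\bar s,\alpha)=\sum_w(1-\alpha_w)\,\bar s_w\,p_w$; in other words, it depends on $q$ only through the marginals $\bar s$. Hence it is enough to exhibit a distribution $\lambda$ over the unit assignments $\tilde S$ whose induced marginal load vector equals $\bar s$.

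Next I would simply set $\lambda_w=\bar s_w/m$, placing this weight on the unit assignment $m e_w$. Since $m e_w$ contributes $m$ tasks to worker $w$ and none elsewhere, the marginal load that $\lambda$ induces at worker $w$ is $m\lambda_w=\bar s_w$, matching $q$ coordinate by coordinate. Feeding this common marginal vector into the (linear) utility then gives identical defender utility under $(\lambda,\alpha)$ and $(q,\alpha)$ --- indeed for every attack, not only the given one.

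The single thing that must be verified, and the only place the argument could break, is that $\lambda$ is a legitimate probability distribution. Nonnegativity is clear because each $\bar s_w\ge0$; normalization is where the budget constraint does the work: every $s\in S$ assigns all $m$ tasks, so $\sum_w s_w=m$, whence $\sum_w\bar s_w=\sum_{s}q(s)\sum_w s_w=m$ and therefore $\sum_w\lambda_w=1$. I anticipate no deeper obstacle: the entire content of the proposition is the observation that, once we are confined to one-worker-per-task allocations, only the first moments $\bar s$ of the randomized policy affect utility, and unit assignments already span every achievable first-moment vector on the simplex.
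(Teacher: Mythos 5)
Your proposal is correct and matches the paper's own proof essentially step for step: both define $\lambda_w = \frac{1}{m}\sum_{s\in S} q_s s_w$ (your normalized expected load vector), use linearity of the utility in the load vector to equate utilities, and verify $\sum_w \lambda_w = 1$ via the budget identity $\sum_w s_w = m$. No meaningful differences to report.
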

\begin{proof}
  Fix an attacker strategy $\alpha$.
  For any probability distribution $q$ over $S$, the expected utility of the defender is $u_{def}(q,\alpha) = \sum_{w\in W} p_w (1-\alpha_w) \sum_{s \in S} q_s s_w$, where $s_w$ is the number of tasks assigned to worker $w$.
The expected utility of the defender for a distribution $\lambda$ over $\tilde{S}$ is $u_{def}(\lambda,\alpha) = m\sum_{w\in W} p_w (1-\alpha_w) \lambda_w$.
Define $\lambda_w = \frac{1}{m}\sum_{s \in S} q_s s_w$.
It suffices to show that $\sum_{w\in W} \lambda_w = 1$.
This follows since $\sum_{w\in W} \sum_{s \in S} q_s s_w = \sum_{s \in S} q_s \sum_{w\in W} s_w = m\sum_{s \in S} q_s = m$ because $\sum_{w\in W} s_w = m$ and $q$ is a probability distribution.
%	First, note that the defender's utility under probability distribution $\lambda$ over $\tilde{S}$ as $ u_{def}(\lambda,a) = m \sum_{w\in W} p_w(1-\alpha_w) \lambda_s $.\footnote{Where $ \alpha_w = 1 $ if $ w $ is attacked} Furthermore, we define the defender's utility given $ X $ and $ \Lambda $ as $ u_{def}(X,\Lambda,W)  = m\sum_{w\in W} p_w(1-\alpha_w)\sum_{x_i\in X} \lambda_i x_i$. Note that by setting $ \lambda_s = \sum_{x_i\in X} \lambda_i x_i $ we can enforce the two utilities to be equal.
%	Now, if $ s = x_i\in X $, the mapping is immediate. Otherwise, if $ n' (>1) $ workers are assigned with at least one task under $ s $, we set $ \lambda_{i} = \lambda_s\cdot\tau_i(s)/m $, $ \forall x_i\in X $, where $ \tau_i(s) $ is the number of tasks worker $ i $ was assigned under allocation $ s $. For example, the allocations $ s_1=\{2,3,3,0\}, s_2 = \{4,0,0,4\} $ that associated with $ \lambda_{s_1}=0.1 $ and $ \lambda_{s_2} = 0.9 $ are results in the same utility as the following tuple of allocations and their probabilities $X = \{8,0,0,0\},\{0,8,0,0\},\{0,0,8,0\},\{0,0,0,8\} $, $ \Lambda = \{0.025+45=0.475,0.0375,0.0375,0.45\}  $.%\\ $ \{\{x_1=\{8,0,0,0\},0.025\};\{x_2=\{0,8,0,0\},0.0375\};\{x_3=\{0,0,8,0\},0.0375\}\} $. 
\end{proof}
This result allows us to restrict attention to probability distributions over $\tilde{S}$.

Next, we make another important observation which implies that in an optimal randomized assignment the support of $\lambda$ must include the best $k$ workers for some $k$.
% (i.e., $ w^k $).
Below, we use $i$ as the rank of a worker in a decreasing order of proficiency.
%We begin by extracting the first parameter, $ W_t(s) $. Since extracting the expected utility given any possible subset $ W'\subseteq W $ is a combinatorial problem by itself, once again, we narrow the search space by showing that in an optimal randomized assignment, if $ \lambda_{i} >0 $ so does $ \lambda_{i+1} $ ($\forall  i< n $). An immediate implication of this finding is that the most proficient worker (i.e., $ w_n $) will always be assigned.
\begin{prop}\label{prop:optRand}
	In an optimal randomized assignment $\lambda$, suppose that $\lambda_i > 0$ for $i > 1$.
Then there must be an optimal assignment in which $\lambda_{i-1} > 0$.
%for any $i < n$, $ \lambda_{i} >0 \Rightarrow \lambda_{i+1} $.
% ($\forall  i< n $)     
\end{prop}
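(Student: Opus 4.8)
The plan is to first make the defender's objective explicit. By Propositions~\ref{prop:oneEach} and~\ref{prop:rep}, the defender's problem reduces to choosing a distribution $\lambda$ over the unit assignments $\tilde{S}$, with payoff against an attack $\alpha$ equal to $u_{def}(\lambda,\alpha)=m\sum_{w}p_w(1-\alpha_w)\lambda_w$. Since the attacker observes only the policy and best-responds by attacking a single worker, it targets a worker maximizing $p_w\lambda_w$. Hence the defender's value of a policy $\lambda$ under the attacker's best response is
\begin{equation}
V(\lambda)=m\Big(\sum_{w}p_w\lambda_w-\max_{w}p_w\lambda_w\Big),
\end{equation}
and an optimal randomized assignment is any $\lambda$ maximizing $V$ over the simplex.

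Now suppose $\lambda$ is optimal with $\lambda_i>0$ for some $i>1$ but $\lambda_{i-1}=0$; I want to produce an optimal $\lambda'$ with $\lambda'_{i-1}>0$. The natural move is an exchange that shifts probability mass from the less proficient worker $i$ to the more proficient worker $i-1$. Concretely, I would set $\lambda'_{i-1}=\lambda_i$, $\lambda'_i=0$, and leave every other coordinate unchanged; this keeps $\lambda'$ on the simplex and gives $\lambda'_{i-1}>0$. Because $p_{i-1}\ge p_i$, the un-attacked term satisfies $\sum_w p_w\lambda'_w=\sum_w p_w\lambda_w+(p_{i-1}-p_i)\lambda_i\ge\sum_w p_w\lambda_w$, so the first term of $V$ weakly increases.

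The delicate part, and the main obstacle, is controlling the $\max$ term, since raising $\lambda_{i-1}$ from $0$ could make worker $i-1$ the new, more damaging target. I would handle this with a short case analysis on $A=\max_{w\neq i-1,i}p_w\lambda_w$, the largest attack value among the unchanged workers. Before the shift the attack value is $\max(A,p_i\lambda_i)$ and after it is $\max(A,p_{i-1}\lambda_i)$. Comparing the increase $(p_{i-1}-p_i)\lambda_i$ in the sum against the increase in this maximum across the three regimes ($A\ge p_{i-1}\lambda_i$; $A\le p_i\lambda_i$; and the intermediate case $p_i\lambda_i<A<p_{i-1}\lambda_i$) shows the $\max$ never grows by more than the sum does, so $V(\lambda')\ge V(\lambda)$ in every case. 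Since $\lambda$ was optimal, $\lambda'$ is optimal as well and has $\lambda'_{i-1}>0$, which proves the claim. As a by-product, the regimes in which $V$ would strictly increase cannot occur at an optimum, which already hints at the equalized $p_w\lambda_w$ structure of the optimal support.
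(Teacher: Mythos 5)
Your proof is correct and takes essentially the same route as the paper's: an exchange argument that shifts probability mass from worker $i$ to the more proficient worker $i-1$ and verifies, by case analysis on the attacker's best response, that the attacked term cannot grow by more than the gain in $\sum_w p_w\lambda_w$, so the shift is weakly improving and optimality is preserved. The only cosmetic difference is that you transfer all of $\lambda_i$ at once and organize the cases around the threshold $A=\max_{w\neq i-1,i}p_w\lambda_w$, whereas the paper moves a small $\epsilon$ (chosen so that $(\lambda_i-\epsilon)p_i>\epsilon p_{i-1}$, keeping $i$ in the support) and cases on which worker is attacked; both arguments are sound and deliver the same conclusion.
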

\begin{proof}
%	This proof is by contradiction. 
It is useful to write the utility of the defender as $\sum_{t\in T}u_t(\sum_i \lambda_i p_i - \lambda_ap_a)$, where $a$ is the worker being attacked.
Suppose that $ \lambda $ is an optimal randomized assignment, and there exist some worker $ i$, s.t. $ \lambda_i>0 $ and $ \lambda_{i-1} = 0 $. 
Since $\lambda_ip_i > 0$, there is $\epsilon > 0$ such that $(\lambda_i - \epsilon)p_i > \epsilon p_{i-1}$.
First, suppose that some node $a \ne i$ is being attacked.
Thus, $\lambda_ap_a \ge \lambda_jp_j$ for all $j \ne a$ (by optimality of the attacker).
Consequently, after $\epsilon$ was removed from the probability of assigning to $i$, node $a$ is still attacked, and the defender receives a net gain of $\epsilon(p_{i-1} - p_i) \ge 0$.
Thus, if $\lambda$ was optimal, so is the new assignment.
Now, suppose that $a = i$.
Again, if $a$ is still being attacked after $\epsilon$ is moved to $i-1$, the defender obtains a non-negative net gain as above.
If instead this change results in some other $j \ne i$ now being attacked, the defender obtains another net gain of $(\lambda_i-\epsilon)p_i + \epsilon p_{i-1} - \lambda_j p_j = (\lambda_i p_i - \lambda_j p_j) + \epsilon(p_{i-1} - p_i) \ge 0$, by optimality condition of the attacker and the fact that $p_{i-1} \ge p_i$.
Again, if $\lambda$ was optimal, so is the new assignment.
\end{proof}
The final piece of structure we observe is that in an optimal randomized assignment the workers in the support must have the same utility for the adversary.
Define $W_t(\lambda) = \{w \in W|\lambda_w > 0\}$, i.e., the workers in the support of a strategy $\lambda$.
\begin{prop}\label{prop:balance}
There exists an optimal randomized assignment with $\lambda_w p_w = \lambda_{w'} p_{w'}$ for all $w,w' \in W_t(\lambda)$.
%Let $W_t(\lambda) $ 
%	Given the set of assigned workers $ W_t(s) \subseteq W $, the optimal assignment probabilities is the one that balance the contribution (i.e., $ \lambda_i\cdot p_i $) for every $ w_i \in W_t(s) $
\end{prop}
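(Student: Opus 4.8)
The plan is to start from the reduced utility expression used in the previous proofs: since the game is zero-sum and the attacker observes only the policy, the defender's payoff against a best response is $U(\lambda)=\sum_i \lambda_i p_i-\max_w \lambda_w p_w$, the attacker always striking some worker $a$ achieving $\max_w \lambda_w p_w$. By Proposition~\ref{prop:optRand} I may assume the support of an optimal $\lambda$ is a prefix $\{1,\dots,k\}$ in the proficiency order, and I want to exhibit an optimum on which $\lambda_w p_w$ is constant across this support. I would fix such an optimal $\lambda$, write $M=\max_w \lambda_w p_w$, and split the support into the attacked set $A=\{i:\lambda_i p_i=M\}$ and the strictly-below set $B$; the goal is to drive $B$ to empty (or up to value $M$) without decreasing $U$.

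The engine of the argument is the same kind of local mass-transfer used in Proposition~\ref{prop:optRand}, but now aimed at equalization. First I would rule out a unique maximizer: if $|A|=1$, moving an amount $\epsilon$ of probability off the lone attacked worker $a$ onto any other support worker $i$ leaves $a$ attacked for small $\epsilon$, and changes the payoff by $\epsilon(p_i-p_a)-(-\epsilon p_a)=\epsilon p_i>0$, contradicting optimality; hence $|A|\ge 2$. Next I would show the below-max workers cannot carry two distinct proficiencies: for $j,j'\in B$ with $p_{j'}>p_j$, shifting $\epsilon$ from $j$ to $j'$ keeps both strictly below $M$, leaves $M$ untouched, and raises $U$ by $\epsilon(p_{j'}-p_j)>0$ — again impossible. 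So, up to freely redistributing mass among equal-proficiency workers (a payoff-neutral move), I may treat $B$ as a single residual value $v_B<M$.

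With $A$ (all at $M$, $|A|\ge 2$) and this residual $B$ in hand, I would finish with a two-sided transfer between the max-set and $B$. Shaving the maximum down by lowering every worker in $A$ by a common $\eta$ and feeding the freed budget into $B$ produces $\Delta U=\eta\big(1-|A|+\sum_{i\in A}p_j/p_i\big)$, while the reverse transfer (draining $B$ toward zero and raising $M$ uniformly over $A$) produces the complementary expression. Optimality of $\lambda$ forces both of these to be nonpositive, which pins the relevant marginal change to zero and lets me follow the neutral direction all the way to the boundary, emptying $B$ onto the max-set (equivalently, balancing everything at a common value). The limiting $\lambda$ is feasible, has the same value by continuity, and satisfies $\lambda_w p_w=\lambda_{w'}p_{w'}$ on its support.

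The step I expect to be the real obstacle is exactly this last two-sided transfer: unlike the first two moves, its sign is not automatically favorable but depends on how spread out the proficiencies in $A$ are, so non-improvement must be extracted from global optimality (and from the prefix structure guaranteed by Proposition~\ref{prop:optRand}) rather than from the local geometry alone. Care is also needed with the attacker's tie-breaking when several workers sit at $M$ — here the zero-sum/SSE convention makes the defender's payoff $\sum_i\lambda_i p_i-M$ independent of which maximizer is struck, which is what keeps the bookkeeping in the transfer clean.
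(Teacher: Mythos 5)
Your proof is correct, and at its computational core it is the same mass-transfer argument as the paper's, but the logical organization differs in a way worth spelling out. The paper fixes a minimizing support worker $z$ and argues by dichotomy: writing $\bar{p}_w = 1/p_w$ and $K = |W_{max}|$, if $\bar{p}_z < \frac{1}{K-1}\sum_{w \in W_{max}} \bar{p}_w$ it claims a strict improvement by shifting mass from a single max-set worker to $z$, and otherwise it moves all of $\lambda_z$ onto $W_{max}$ while keeping the attacker's values there equal --- this second move is literally your ``reverse transfer,'' with the same constant $C = \lambda_z / \sum_{w \in W_{max}} \bar{p}_w$ and the same net gain $\lambda_z\bigl(\frac{K-1}{\sum_{w \in W_{max}} \bar{p}_w} - p_z\bigr)$ --- and then iterates over remaining support workers outside $W_{max}$, exactly as your worker-by-worker drain does. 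Your route replaces the dichotomy by a two-sided first-order condition: both your forward transfer (uniform shave of $A$ in attacker-value, fed to $j$) and the reverse transfer are feasible perturbations whose marginal changes are exact negatives of each other, so optimality pins $p_j = \frac{|A|-1}{\sum_{i \in A} 1/p_i}$ to equality, after which the neutral drain empties $B$ at constant utility. This buys you something real: the paper's Case-1 inference --- from $\bar{p}_z < \frac{1}{K-1}\sum_{w \in W_{max}} \bar{p}_w$ it concludes that some $w \in W_{max}$ has $p_z > p_w$ --- does not actually follow (take all proficiencies equal and $K=2$: the inequality holds, no such $w$ exists, and the Case-2 merge is then strictly harmful), whereas your uniform shave of the whole max-set delivers the needed strict improvement in precisely those configurations; your anticipated ``real obstacle'' is thus resolved exactly as you hoped, and more cleanly than in the paper. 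Two cosmetic caveats: consolidating $B$'s mass onto a single worker is not literally payoff-neutral if it would push that worker's value past $M$ (the attacker would switch targets), but this is harmless since your final argument can be run worker-by-worker, each $j \in B$ being forced to the same critical proficiency anyway; and your $|A| \ge 2$ step presumes a second support worker exists, the singleton-support case being trivial. (You also invoke the prefix structure from Proposition~\ref{prop:optRand}, but your argument never actually uses it.)
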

%\begin{proof}
%    This proof is by contradiction. Assume a non-balancing allocation, $ \pi' $. In such allocation, there is at least one worker, $ w_j $ whose benefit is higher than the rest of the workers. 
%    Naturally, this worker will also be the one attacked by the adversary. Note that in this case, the defender will always benefit by deviating to a different allocation $ \pi $, in which this worker's contribution is the same as the second highest worker, $ w_k $. In this case, with a probability of $ \lambda_k $ the time, worker $ w_j $ will still be attacked, making no change. But with probability $ w_j-w_k $ 
%    Next, we turn to prove that all the workers should be balanced. By balancing the contribution of each worker, the maximal contribution resulted in this allocation decreases. 
%    %Since this worker was attacked, the defender will not lose by lowering this worker's contribution but on the other hand, will benefit from an additional utility by other workers that would be 
%    The adversary always choose the worker that contributes the most to the defender's utility (i.e., associated with the maximal $ \lambda_i\pi_i $). Hence, given an unbalanced assignment, the defender will benefit from allocating some of this worker with a lower assignment probability and divide it among other workers.
%    $  $
%    worker $ w_t $ is still attacked, resulting in a $ u_t $. Still, the other workers are assigned more often, hence increasing the defender's expected utility.
%\end{proof}
\begin{proof}
Suppose that an optimal $\lambda$ has two workers $w,w' \in W_t(\lambda)$ with $\lambda_w p_w > \lambda_{w'} p_{w'}$.
Define $u_{max} = \max_{w \in W_t(\lambda)} \lambda_w p_w$ and $u_{min} = \min_{w \in W_t(\lambda)} \lambda_w p_w$.
Let $W_{max}$ be the set of maximizing workers (with identical marginal value to the attacker, $u_{max}$), and let $z$ be some minimizing worker.
Define $K = |W_{max}|$.
By optimality of the attacker, some $w \in W_{max}$ is attacked and by our assumption $u_{max} > u_{min}$.
For any $w \in W_{max}$, define $\bar{p}_w = 1/p_w$ and similarly let $\bar{p}_z = 1/p_z$.

First, suppose that $\bar{p}_z < \frac{1}{K-1} \sum_{w \in W_{max}} \bar{p}_w$.
Then there is some $w \in W_{max}$ with $\bar{p}_z < \bar{p}_w$ or, equivalently, $p_z > p_w$.
Then there exists $\epsilon > 0$ small enough so that if we change $\lambda_z$ to $\lambda_z + \epsilon$ and $\lambda_w$ to $\lambda_w-\epsilon$ the attacker does not attack $z$, and we gain $\epsilon (p_z - p_w)$ and either lose the same as before to the attack (if $K>1$) or lose less (if $K=1$).
Consequently, $\lambda$ cannot have been optimal, and this is a contradiction.

Thus, it must be that $\bar{p}_z \ge \frac{1}{K-1} \sum_{w \in W_{max}} \bar{p}_w$.
Suppose now that we move all of $\lambda_z$ from $z$ onto all workers in $W_{max}$, maintaining their utility to the attacker as constant (and thus the attacker does not change which worker is attacked).
For any worker $w \in W_{max}$, the resulting $\lambda_w' = \lambda_w p_w + C$, where $C = \epsilon_w p_w$.
Moreover, it must be that $\sum_{w \in W_{max}} \epsilon_w = \lambda_z$.
Since $\epsilon_w = C/p_w$, we can find that $C = \frac{\lambda_z}{\sum_{w \in W_{max}} \bar{p}_w}$.
Consequently, the defender's net gain from the resulting change is
\[
(K-1) \frac{\lambda_z}{\sum_{w \in W_{max}} \bar{p}_w} - \lambda_z p_z = \lambda_z \left(\frac{(K-1)}{\sum_{w \in W_{max}} \bar{p}_w} - p_z\right),
\]
which is non-negative since $\bar{p}_z \ge \frac{1}{K-1} \sum_{w \in W_{max}} \bar{p}_w$.
We can then repeat the process iteratively, removing any other workers $z$ in the support but not in $W_{max}$ to obtain a solution with uniform $\lambda_w' p_w$ for all $w$ with $\lambda_w' > 0$ which is at least as good as the original solution $\lambda$.
\end{proof}

Algorithm~\ref{alg:DoSRand} uses these insights for computing an optimal randomized assignment in linear time.
%To summarize this discussion, we provide with Algorithm~\ref{alg:DoSRand} that results in the optimal assignment probabilities, $ \Lambda^*=\{\lambda_1,\ldots,\lambda_n\} $, for each assigned worker s.t. the defender's expected utility is maximized.%,  $m\cdot [\sum_{i=1}^{n} \lambda_{i}\cdot p_i-\max(\lambda_i\cdot p_i)]  $,  is maximized.
\begin{algorithm}[hbt]
	\caption{Randomized assignment} \label{alg:DoSRand} 
	\textbf{input:} The set of workers $W$, and their proficiencies $P$\\
	\textbf{return:} The optimal randomized policy $ \lambda^* $
	\begin{algorithmic}[1]
		\State $ u_{max} \leftarrow 0 $
        %\State $ k^* \leftarrow 1$
        \State $ v \leftarrow \frac{1}{p_1} $
		\For {$ k \in \{2,\ldots,n\} $}
%			\State $  target \leftarrow 1$	
%                        \State $ W^k \leftarrow [1..k]$
%			\State $ v \leftarrow \sum_{w_i\in W^k} \frac{p_1}{p_i} $
			\State $ v \leftarrow v + \frac{1}{p_k} $%\sum_{i = 1}^k \frac{1}{p_i} $
%			\ForAll  {$ w_i \in W^k$}
%				\State $ sum \leftarrow sum +\frac{p_n}{p_i} $
%			\EndFor
%			\ForAll {$ i \in W^k $}
			%\State $ \lambda_1 \leftarrow  $
			%\For  {$ i \in \{2,\ldots,k\}$}
			%	\State $ \lambda_i \leftarrow \frac{p_1}{p_i}\cdot \lambda_1 $
			%\EndFor
%			\If {$ u_{def}(\lambda,\alpha) > u_{max} $}
                        \If{$\frac{k-1}{v} > u_{max}$}
%				\State $ u_{max} \leftarrow u_{def}(\lambda,\alpha) $
                                \State $u_{max} \leftarrow \frac{k-1}{v}$
				%\State $ \lambda^* \leftarrow \lambda $
                                \State $k^* \leftarrow k$

			\EndIf
		\EndFor
                \For {$ i \in \{1,\ldots,k^*\}$}
                % \State $ \lambda_i \leftarrow  \frac{p_1}{p_i}\cdot \frac{1}{sum} $
                \State $ \lambda_i^* \leftarrow  \frac{1}{p_i v} $
                % \If {$ \lambda_ip_i \geq u_{max} $}%\lambda_{target}p_{target} $}
                % \State $ \alpha_{target} \leftarrow 0 $					
                % \State $ target \leftarrow i $
                % \State $ \alpha_i \leftarrow 1 $
                % \Else
                % \State $ \alpha_{i} \leftarrow 0 $
                % \EndIf
                \EndFor\\
		\Return  $ \lambda^*$
	\end{algorithmic}
\end{algorithm} 
At the high level, it attempts to compute the randomized assignments for all possible $k$ most proficient workers who can be in the support of the optimal assignment, and then returns the assignment which yields the highest expected utility to the defender.
For a given $k$, we can find $\lambda_i$ directly for all $ i \in \{1,\ldots,k\}$: $\lambda_i = \frac{1}{p_i \sum_j \frac{1}{p_j}}$.
Consequently, the utility to the defender of an optimal randomized assignment for $k$ tasks is $\frac{k-1}{\sum_j \frac{1}{p_j}}$ (since one worker is attacked, and it doesn't matter which one).

\leaveout{
Given the pattern of the best allocation, we continue our analysis and look at the different environment parameters' role on the defender's expected utility. First, since the defender always assigns all the tasks to a single worker, we note that the number of tasks available, $ m $, will have no effect on the defender's expected utility per task. Thus, we only focus on the effect of the number of workers available to the defender ($ n $). Figure~\ref{fig:Randomized} visualizes this effect given two different distributions from which the workers' proficiencies were sampled: Uniform distribution and Power Law distribution (with $ k=0.5 $). Both distributions were truncated such that the sampled proficiency values are between $ 0.5 $ and $ 1 $. We setup multiple environments, all with $ 100 $ tasks where each of the environments differ by the number of workers ( $ 2 - 50 $ ). For each tuple of environment and distribution, we extract the defender's expected utility, averaged over $ 20,000 $ runs, each with a different workers' proficiencies.\footnote{For the case of $ i+1 $ workers, the same $ i $ workers as before were chosen along with one new worker.} Figure~\ref{subfig:randomizedutility} depicts the defender's expected utility when using randomized assignment given the different sampling distributions. Naturally, as the number of available workers increases, and thereby the number of assigned workers increases (as depicted in Figure~\ref{subfig:randomizedass}), the defender's expected utility increases as well. Especially, when workers' proficiencies were sampled from the Power Law distribution (i.e., with a larger portion of low proficient workers), we see that the expected utility is lower than the one using the Uniform distribution. The interesting insight is that in this case is that the defender also chooses to assign fewer workers as depicted in Figure~\ref{subfig:randomizedass}.%there is a decrease in the attack's efficiency that is represented in 
\begin{figure}[htb]
	\centering
	\begin{subfigure}{0.25\textwidth}
		\centering
		\resizebox{\height}{4cm}{
			\includegraphics[width=1.4\linewidth]{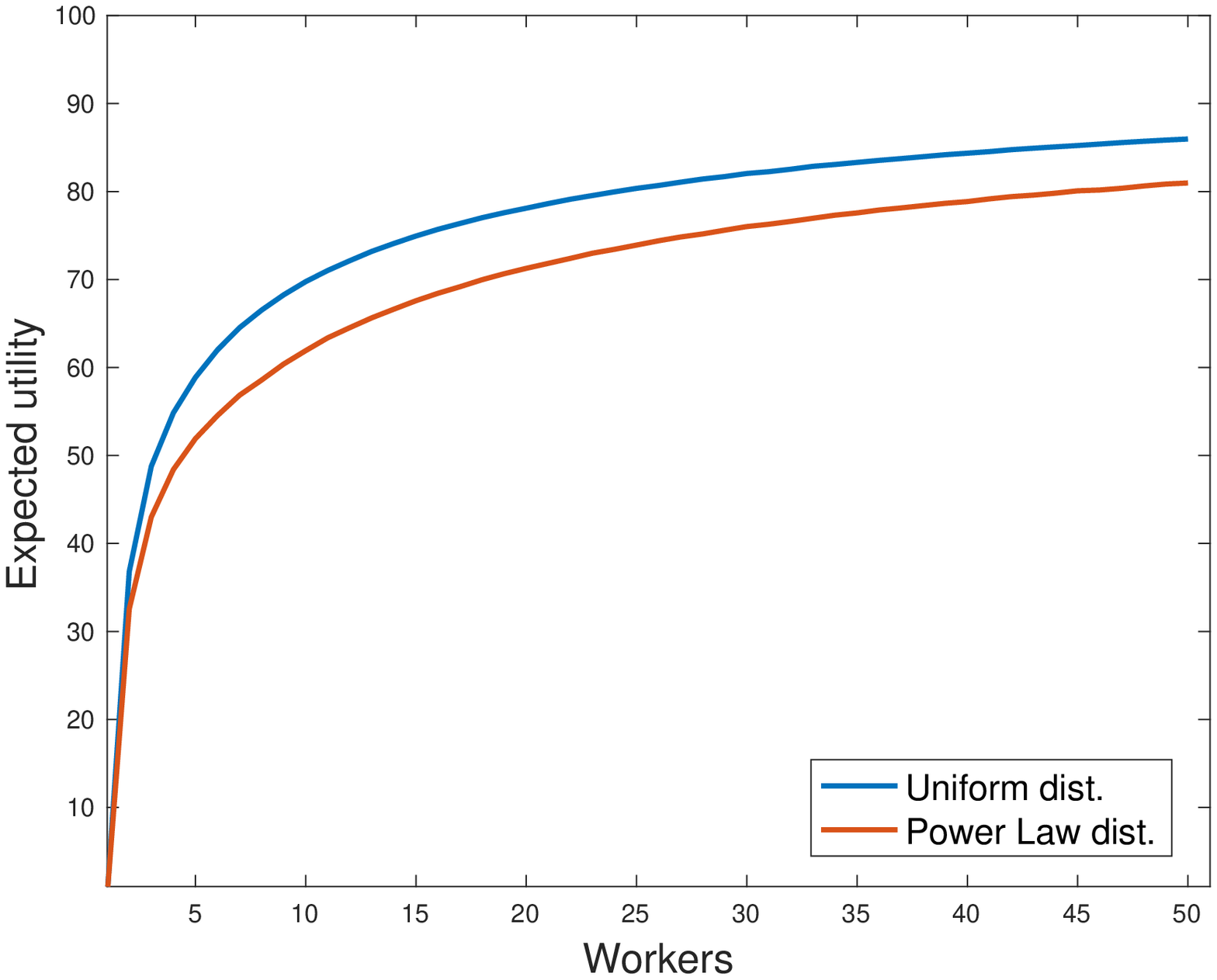}
		}
		\caption{Expected utility}%Available workers' effect on the defender's expected utility}
		\label{subfig:randomizedutility}
	\end{subfigure}%
	\begin{subfigure}{0.25\textwidth}
		\centering
		\resizebox{\height}{4cm}{
		\includegraphics[width=1.4\linewidth]{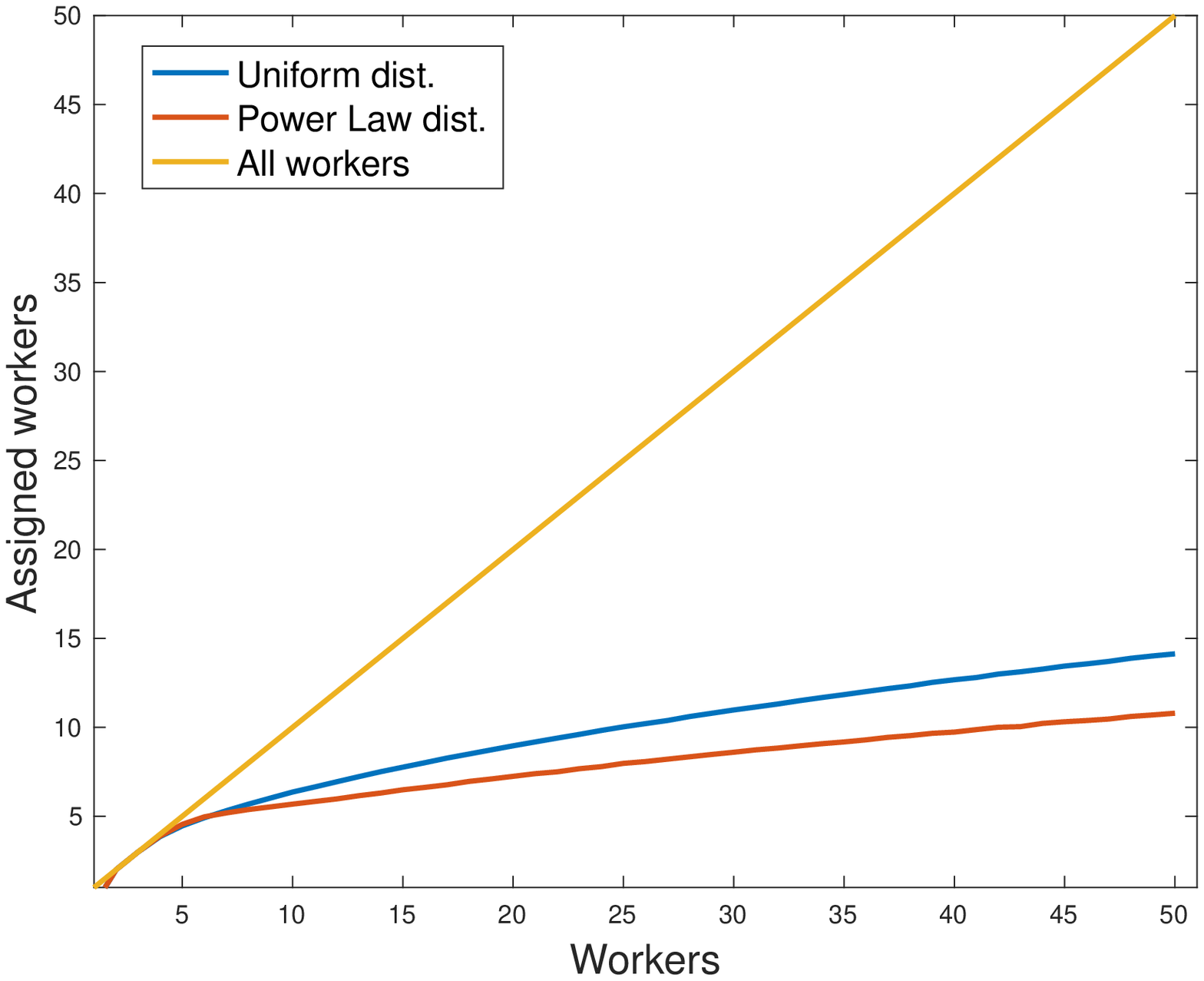}
		}
		\caption{Assigned workers}
		\label{subfig:randomizedass}
	\end{subfigure}\hfil
	\caption{Randomized assignment}
	\label{fig:Randomized}
\end{figure}
}

\subsection{Deterministic strategy}
Next we consider the setting in which the attacker observes the actual task assignment, in which case the defender's focus becomes on computing an optimal deterministic assignment.
%Next, we move to analyze the deterministic strategy that is a special case of the randomized one. 
%In this case, the adversary has \emph{complete knowledge} about the exact tasks' allocation for each worker, denoted $ A = \{a_1,\ldots,a_n\} $. %a specific variant of the ranomized assignment. One in which the utility of all the tasks is identical (i.e., $ u_i = u_j $, $ \forall w_i,w_j \in W $). 
Recall that we use $s = \{s_1,\ldots,s_n\}$ to denote the number of tasks allocated to each worker.
Although the space of deterministic allocations is large, we now observe several properties of optimal deterministic assignments which allow us to devise a polynomial time algorithm for this problem.

Our first several results are similar to the observations we made for randomized assignments, but require different arguments.
\begin{prop}\label{prop:oneEachDeterministic} 
Suppose that tasks are homogeneous. 
For any optimal deterministic strategy $s$ there is a weakly utility-improving deterministic assignment $s'$ for the requester which assigns each task to a single worker.
%For any mixed strategy $R$ of the attacker and any assignment $s$, there is a weakly utility-improving assignment $s'$ for the requester which assigns each task to a single worker.
%	From the defender's point of view, if all tasks are homogeneous, assigning one worker per task (weakly) dominates (i.e., result in equal or higher overall expected utility) assigning multiple workers per task.%The defender will not increase her expected-utility by assigning multiple workers to each task
\end{prop}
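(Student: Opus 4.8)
The plan is to reuse the local-rebalancing idea behind Proposition~\ref{prop:oneEach}, but to replace its Nash-equilibrium reasoning with a \emph{pointwise domination} argument over attacks, which is the natural device in the full-information (deterministic) setting where the attacker best-responds to the realized assignment. Write $f(s)=\min_\alpha u_{def}(s,\alpha)$ for the defender's guaranteed value; the claim that $s'$ is weakly improving means $f(s')\ge f(s)$. I would take an optimal $s$ and assume it exhausts the budget (without loss of generality, since marginal utility is nonnegative), so that $\sum_{t\in T'(s)}(|W_t(s)|-1)$ equals the number of unassigned tasks. If every task already has one worker we are done; otherwise pick a task $t$ with $W_t(s)=\{w_1,\dots,w_k\}$ and $k\ge 2$, and note this counting identity furnishes at least $k-1$ unassigned tasks.

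Form $s'$ by keeping one worker (say $w_1$) on $t$ and relocating $w_2,\dots,w_k$ onto $k-1$ distinct unassigned tasks $t_2,\dots,t_k$, leaving all remaining tasks unchanged. The key step is to prove $u_{def}(s',\alpha)\ge u_{def}(s,\alpha)$ for \emph{every} fixed attack $\alpha$. Because only the tasks $t,t_2,\dots,t_k$ differ between $s$ and $s'$, I only need to compare their total contribution, splitting on the attacked worker $a$. If $a\notin W_t(s)$, then under $s$ these tasks contribute at most $1$ (a single task with any aggregation rule contributes at most $u_t=1$, and the rest contribute $0$), whereas under $s'$ they contribute $\sum_{i=1}^{k}p_{w_i}\ge k/2\ge 1$ since each $p_{w_i}\ge 0.5$. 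If $a=w_j\in W_t(s)$, the attacked worker is disabled in both assignments, so under $s'$ the surviving $k-1$ workers sit on distinct single-worker tasks and contribute $\sum_{i\ne j}p_{w_i}\ge (k-1)/2$, while under $s$ those same $k-1$ workers share the single task $t$ and contribute at most $1$.

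Once $u_{def}(s',\alpha)\ge u_{def}(s,\alpha)$ holds pointwise, taking minima over $\alpha$ gives $f(s')\ge f(s)$, since a pointwise-larger payoff function has a weakly larger minimum; thus the best-responding attacker cannot drive $s'$ below the value of $s$. Each application strictly decreases $\sum_t(|W_t(s)|-1)$, so finitely many repetitions produce an assignment with one worker per task and $f$-value at least $f(s)$, and optimality of $s$ then forces equality.

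I expect the genuine difficulty to be the boundary case $k=2$ with $a=w_j\in W_t(s)$: here only one worker survives on $t$, so $(k-1)/2=1/2$ and the crude ``at most $1$'' bound on the $s$-side is too weak to conclude. The fix is to invoke the model assumption $\delta(l_w)=l_w$, which makes the lone surviving worker contribute \emph{exactly} its proficiency under both $s$ and $s'$, yielding a net change of exactly $0$. For $k\ge 3$ the bound $(k-1)/2\ge 1$ makes the crude estimate suffice, so this single boundary case is the only place where the argument must be handled with care.
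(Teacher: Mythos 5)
Your proof is correct, but it takes a genuinely different route from the paper's. The paper's proof moves \emph{one} worker $w$ off the multi-worker task $\bar{t}$ at a time and compares defender values under the attacker's \emph{best responses} before and after the move, splitting on whether the attack target changes: if $\bar{w}$ is still attacked the net gain is $p_w - u_{w\bar{t}}^M \ge 0$, and if the attacker switches to $w$ the net gain is $u_{\bar{w}} - u_w \ge 0$ by the attacker's optimality at $s$. You instead relocate all $k-1$ surplus workers simultaneously (mirroring the construction in the randomized Proposition~\ref{prop:oneEach}) and prove the stronger \emph{pointwise} statement $u_{def}(s',\alpha) \ge u_{def}(s,\alpha)$ for every fixed attack, then pass to minima. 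Your version buys something real: it never invokes the attacker's best-response structure, needing only ``each task contributes at most $u_t = 1$'' and $p_w \ge 0.5$, so it sidesteps both the paper's per-worker marginal bound $u_{w\bar{t}}^M \le p_w$ (which itself requires an argument about the aggregation rule $\delta$) and the somewhat delicate claim that the attacker ``will still attack $w$'' after the move; the paper's version, in exchange, proceeds by minimal local moves. Your flagged boundary case ($k=2$ with an attacked worker in $W_t(s)$) is indeed where care is needed, and your fix via $\delta(l_w) = l_w$ is the right one --- it is precisely the assumption the paper uses implicitly when it asserts the defender is indifferent after relocating the attacked worker. Strictly speaking, the model states $\delta(l_w) = l_w$ only for tasks \emph{assigned} a single worker, so like the paper you are extending it to a task where one of two assigned workers is disabled; this is a model ambiguity shared by both proofs, not a gap in yours.
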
 
\begin{proof}
Consider an optimal assignment $s$ and the corresponding best response by the attacker, $\alpha$, in which a worker $\bar{w}$ is attacked.
Let a task $\bar{t}$ be assigned to a set of workers $W_{\bar{t}}$ with $|W_{\bar{t}}| = k > 2$.
Then there must be another task $t'$ which is unassigned.
Now consider a worker $w \in W_{\bar{t}}$.
Since utility is additive, we can consider just the marginal utility of any worker $w'$ to the defender and attacker; denote this by $u_{w'}$.
Let $T_{w'}$ be the set of tasks assigned to a worker $w'$ under $s$.
Let $u_w = \sum_{t \in T_w} u_{wt}^M$, where $u_{wt}^M$ is the marginal utility of worker of $w$ towards a task $t$.
Clearly, $u_w \le u_{\bar{w}}$, since the attacker is playing a best response.

Suppose that we reassign $w$ from $\bar{t}$ to $t'$.
If $w = \bar{w}$, the attacker will still attack $w$ (since the utility of $w$ to the attacker can only increase), and the defender is indifferent.
If $w \ne \bar{w}$, there are two cases: (a) the attacker still attacks $\bar{w}$ after the change, and (b) the attacker now switches to attack $w$.
Suppose the attacker still attacks $\bar{w}$.
The defender's net gain is $p_{w} - u_{w\bar{t}}^M \ge 0$.
If, instead, the attacker now attacks $w$, the defender's net gain is $u_{\bar{w}} - u_w \ge 0$.
\end{proof}
Consequently, the strategy space $S$ defined above (where a single worker is assigned for any task) still suffices.

%We show that the same pattern that was proved to be optimal for the randomized strategy (i.e., continuous assignment that includes the most proficient worker) will be optimal for this strategy as well, provide with an optimal polynomial-time algorithm for this case and analyze the the number of available tasks' and workers' effect on the defender's expected utility.%For this case, the problem of extracting the utility-maximizing assignment can be optimally solved with a polynomial time algorithm which we propose thought this subsection.

%Recall from Proposition~\ref{prop:oneEach}, that the defender will assign only one worker to each task. 
%Since we can restrict attention to assignments in which a single worker is assigned per task, 
%Thus, 
Given a deterministic assignment $s \in S$ and the attack strategy $\alpha$, the defender's expected utility is: 
\begin{equation}\label{eq:pure}
u_{def}(s,\alpha)=\sum_{w\in W}s_{w} p_{w}(1-\alpha_w).%-\max (a_i\cdot p_{i})
\end{equation}
%Specifically, the optimization problem that the defender facing is:
%\begin{subequations}
%	\label{E:determinicsticHom}
%	\begin{align}
%	&\max_A \displaystyle\sum\limits_{w_i\in W} a_i\cdot p_{i} - \gamma \label{objectiveD}\\
%	&s.t.:\displaystyle\sum\limits_{w_i\in W'}   a_i = m \label{cons:allTasksD}\\
%	&\gamma- a_i\cdot p_{i} \geq 0 ,\forall w_i \in W \label{cons:maxD}
%	\end{align}
%\end{subequations}
%The objective (\ref{objectiveD}) aims to maximize the defender's expected utility given the adversary's attack (second term). Constraint (\ref{cons:allTasksD}) ensures that each allocation assigns all the possible tasks among the different workers and finally, constraint (\ref{cons:maxD}) validates that the adversary's target is the worker who contributes the most to the defender's expected utility.
We now derive a similar property of optimal deterministic assignments that held for randomized assignments: there is always an optimal deterministic assignment in which we assign the $k$ most proficient workers for some $k$.
%As discussed earlier, the structure of an optimal deterministic assignment is identical to the optimal randomized one. We prove this using Theorem~\ref{th:optDet} showing that in an optimal deterministic assignment, if $ a_{i} >0 $ so does $ a_{i+1} $ ($\forall  i< n $).

\begin{prop}\label{th:optDet}
	In an optimal deterministic assignment $s$, suppose that $s_i > 0$ for $i > 1$.
Then there must be an optimal assignment in which $s_{i-1} > 0$.
%	In an optimal deterministic assignment, if $ s_{i} >0 $ so does $ s_{i+1} $ ($\forall  i< n $)
\end{prop}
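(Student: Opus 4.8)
The plan is to mirror the argument for the randomized case (Proposition~\ref{prop:optRand}), but working with integer task moves rather than infinitesimal probability shifts. First I would record the exact form of the defender's equilibrium payoff. Since the game is zero-sum and the attacker observes the realized assignment, a best-responding attacker targets the worker maximizing $s_w p_w$; by the utility expression in \eqref{eq:pure}, the defender's value is therefore
\[
U(s) = \sum_{w} s_w p_w - \max_{w} s_w p_w .
\]
Writing $M = \max_{w} s_w p_w$ for the attacked worker's value, the whole problem reduces to tracking how $\sum_w s_w p_w$ and $M$ respond to a single local perturbation.

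If $s_{i-1} > 0$ already there is nothing to prove, so I assume $s_{i-1} = 0$. I would then define $s'$ by moving exactly one task from worker $i$ to worker $i-1$: set $s'_{i-1} = 1$, $s'_i = s_i - 1$, and $s'_w = s_w$ otherwise. This is a legal assignment (it still sums to $m$ and keeps one worker per task), and it has $s'_{i-1} > 0$, so it suffices to show $U(s') \ge U(s)$; optimality of $s$ then forces $U(s') = U(s)$, making $s'$ the desired optimal assignment. The total term shifts cleanly, $\sum_w s'_w p_w = \sum_w s_w p_w + (p_{i-1} - p_i)$, which is a weak gain because $p_{i-1} \ge p_i$.

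The delicate part --- and the main obstacle --- is controlling the new attacker value $M' = \max_w s'_w p_w$, since the reassignment may shift the attack onto worker $i-1$. Here I would combine two observations: every value other than $s'_{i-1} p_{i-1} = p_{i-1}$ is bounded by $M$ (worker $i$ only decreased, and all other workers are unchanged), so $M' \le \max(M, p_{i-1})$. Plugging this in,
\[
U(s') - U(s) = (p_{i-1} - p_i) - (M' - M) \ge (p_{i-1} - p_i) + M - \max(M, p_{i-1}).
\]
When $M \ge p_{i-1}$ the right-hand side equals $p_{i-1} - p_i \ge 0$; when $M < p_{i-1}$ it equals $M - p_i$, which is still nonnegative because $s_i \ge 1$ forces $M \ge s_i p_i \ge p_i$. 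Either way $U(s') \ge U(s)$, which completes the argument. The subtlety to keep honest is precisely that last inequality $M \ge p_i$: it is what rescues the case in which a single task already makes $i-1$ the most attractive target, and it hinges on worker $i$ carrying at least one task to begin with.
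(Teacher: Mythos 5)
Your proof is correct and follows essentially the same route as the paper's: both move a single task from worker $i$ to worker $i-1$ (the paper's text says ``from $i-1$ to $i$,'' evidently a typo) and verify weak improvement by cases on where the attack lands, with the same key rescue in the hard case---your $M \ge s_i p_i \ge p_i$ is exactly the paper's net gain $p_i(|T_i|-1)\ge 0$ when the attack shifts from $i$ to $i-1$. Your packaging via $U(s)=\sum_w s_w p_w - \max_w s_w p_w$ and the bound $M' \le \max(M,p_{i-1})$ merely compresses the paper's four-way case analysis into two cases, which is a tidier write-up of the identical argument.
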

\begin{proof}
Consider an optimal deterministic assignment $s$ and the attacker's best response $\alpha$ in which a worker $w$ is attacked.
Now, consider moving 1 task from $i-1$ to $i$.
Suppose that $w = i$, that is, the worker $i$ is attacked.
If the change results in $i-1$ being attacked, the net gain to the defender is $p_i (|T_i|-1) \ge 0$.
Otherwise, the net gain is $p_{i-1} > 0$.
Suppose that another worker $w \ne i$ is attacked.
If $i-1$ is now attacked, the net gain is $p_w (|T_w|-1) \ge 0$
Otherwise, the net gain is $p_{i-1} - p_i \ge 0$.
%	This proof is by contradiction. Assume the optimal assignment, $ \pi $, in which $ s_i > 0 $ and $ s_{i+1} = 0 $. For readability, we denote the adversary's preferred target as $ w_t $ and denote this specific $ s_i $ as $ s_{i'} $. Consider a different assignment, $ \pi' $, in which all the workers apart from $ w_i $ and $ w_{i+1} $ are assigned with the same number of task as in $ \pi $ and  $ w_{i+1} $ is assigned with $ s_{i+1} = s_t\cdot p_t/p_{i+1} $ tasks. Note that since $ p_{i+1} > p_i $, $ s_{i+1} $ is promised to be lower than $ a_{i'} $ and $ a_{i+1} \geq 0$ (so all tasks will be assigned). Since $ s_{i'}\cdot p_i + s_{i+1}\cdot p_{i+1} > a_{i'}\cdot p_i $, $ \pi' $ results in an higher expected utility for the defender contradicting the assumption that $ \pi $ is the optimal assignment. Hence, in an optimal deterministic assignment, if $ s_{i} >0 $ so does $ a_{i+1} $ ($\forall  i< n $).
\end{proof}

Next, we present an allocation algorithm for optimal deterministic assignment (Algorithm~\ref{alg:deterministic}) which has complexity $O(n^2m)$, quadratic in the number of workers and linear in the number of tasks.
The intuition behind the algorithm is to consider each worker $i$ as a potential target of an attack, and then compute the best deterministic allocation subject to a constraint that $i$ is attacked (i.e., that $p_i s_i \ge p_js_j$ for all other workers $j \ne i$).
Subject to this constraint, we consider all possible numbers of tasks that can be assigned to $i$, and then assign as many tasks as possible to non-attacked workers in order of their proficiency.
Optimality follows from the fact that we exhaustively search possible targets and allocation policies to these, and assign as many tasks as possible to the most effective workers.
%Interestingly, this assignment is based the well-known Knapsack problem as described in Algorithm~\ref{alg:deterministic}.
\begin{algorithm}[hbt]
	\caption{Optimal allocation of tasks}
	\label{alg:deterministic} 
	\textbf{input:} The set of workers $W$, and their proficiencies $P$\\
	\textbf{return:} The optimal deterministic policy $ s^* $\\
	\begin{algorithmic}[1]
		\State $ u_{max} \leftarrow 0 $
		\For {$ i \in \{1,\ldots,n\} $}
			\For {$ s_i \in \{1,\ldots,m\} $}
				\State $ util \leftarrow 0  $
				\State $ B \leftarrow m - s_i $
				\For { $ j \in \{1,\ldots,n|j\ne i\} $ }
					\State $ s_j \leftarrow \min(\floor*{\frac{p_i}{p_j}  s_i},B) $
					\State $ util \leftarrow util +s_jp_j $
					\State $ B \leftarrow B-s_j $
					\If {$ B = 0 $}
						\State break
					\EndIf
				\EndFor
				\If {$ util >  u_{max}$}
					\State $ u_{max} \leftarrow util $	
					\State $ s^* \leftarrow s $
				\EndIf
			\EndFor
		\EndFor\\
		\Return $ s^* $					
%		\If {$ B = 0 $}
%			\State break
%		\EndIf
	\end{algorithmic} 
\end{algorithm}

\leaveout{
Given this best allocation, we analyze the role of environment's different parameters on the defender's expected utility. First, we analyze how changing the number of available tasks ($ m $) affects the defender's expected utility. We simulate a set of environments with $ 10 $ workers and changed that number of tasks in each from $ 2 $ to $ 50 $. Once again, the results were averaged over $ 20,000 $ runs, each with a different workers' proficiencies sampled from a truncated Uniform distribution between $ 0.5 $ and $ 1 $.
\begin{figure}[htb]
	\centering
	\begin{subfigure}{0.25\textwidth}
		\centering
		\resizebox{\height}{4cm}{
		\includegraphics[width=1.3\linewidth]{DeterministicTasks}
		}
		\caption{Expected utility (per task)}
		\label{fig:deterministictasks}
	\end{subfigure}%
	\begin{subfigure}{0.25\textwidth}
		\centering
		\resizebox{\height}{4cm}{
			\includegraphics[width=1.3\linewidth]{DeterministicTasksAssigned}
		}
		\caption{Assigned workers}
		\label{fig:deterministictasksassigned}
	\end{subfigure}
	\caption{Deterministic assignment - Tasks availability}
\end{figure}

Figure~\ref{fig:deterministictasks} visualizes the defender's expected utility per task as a function of the number of tasks. As depicted in the figure, from a certain point ($ 15 $ tasks) the defender's expected utility (per task) stabilize and does not change although the number of tasks increases. As for the number of workers assigned, Figure~\ref{fig:deterministictasksassigned} visualize the number of workers that the defender assigns as a function of the number of tasks. As depicted in the figure, we observe a similar pattern in which from a certain point onward ($ 20 $ tasks), the number of assigned worker becomes quite steady.

Next, we turn to analyze the effect of the number of workers available to the defender ($ n $) and followed the same experimental design as for the randomized assignment. Figure~\ref{fig:Deterministic} depicts the expected utility (Figure~\ref{subfig:deterministicutility}) and the number of assigned workers (Figure~\ref{subfig:deterministicass} as a function of the number of available workers. In this case, as well, we show that a more proficient set of workers (sampled from the Uniform distribution) will result in a higher expected utility than the less proficient ones (sampled from the Power law one). Still, in contrast to our finding for the randomized assignment, in this case, more workers are assigned in the setting with the more proficient workers.  
\begin{figure}[htb]
	\centering
	\begin{subfigure}{0.25\textwidth}
		\centering
		\resizebox{\height}{4cm}{
			\includegraphics[width=1.3\linewidth]{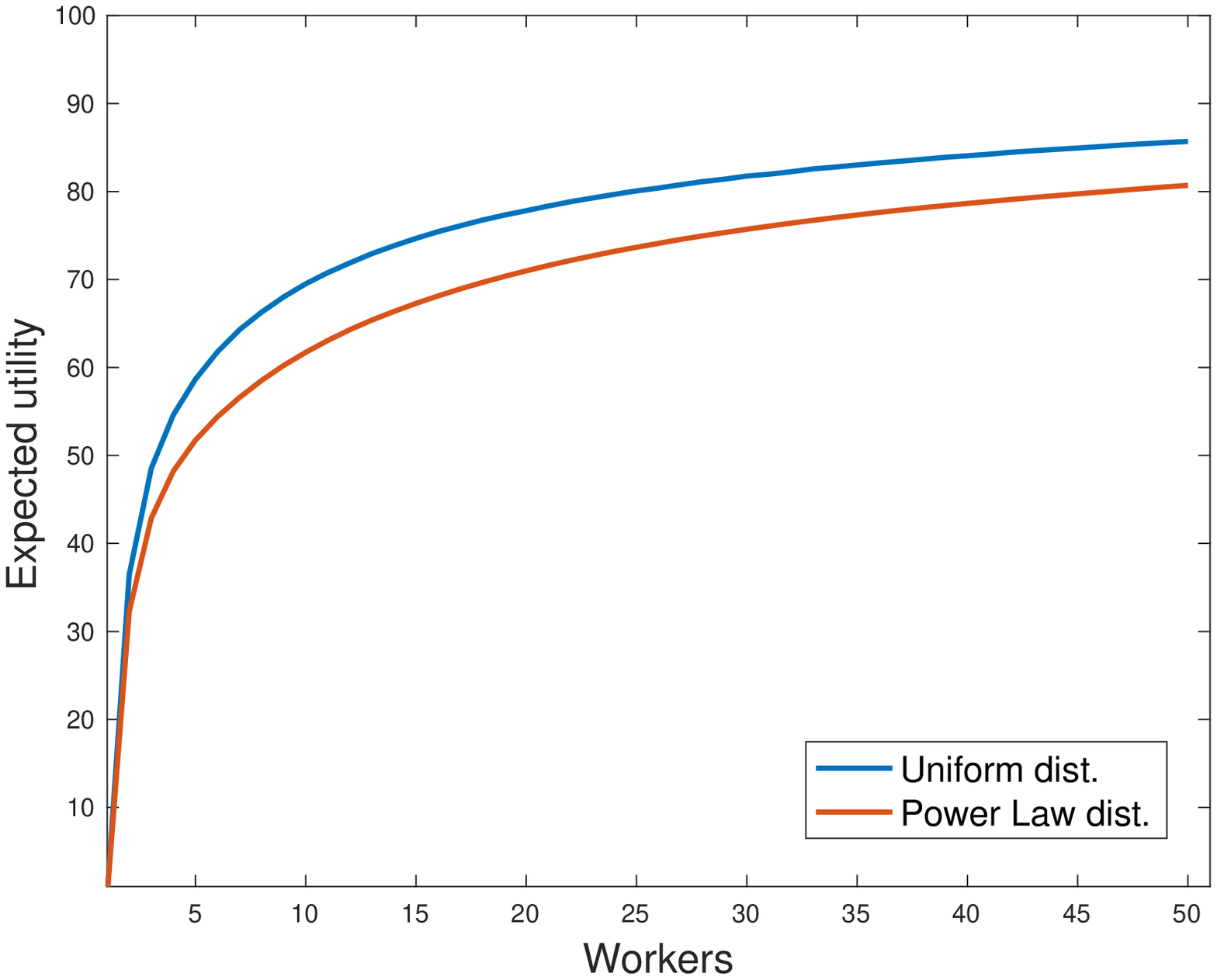}
		}
		\caption{Expected utility}%Available workers' effect on the defender's expected utility}
		\label{subfig:deterministicutility}
	\end{subfigure}%
	\begin{subfigure}{0.25\textwidth}
		\centering
		\resizebox{\height}{4cm}{
			\includegraphics[width=1.3\linewidth]{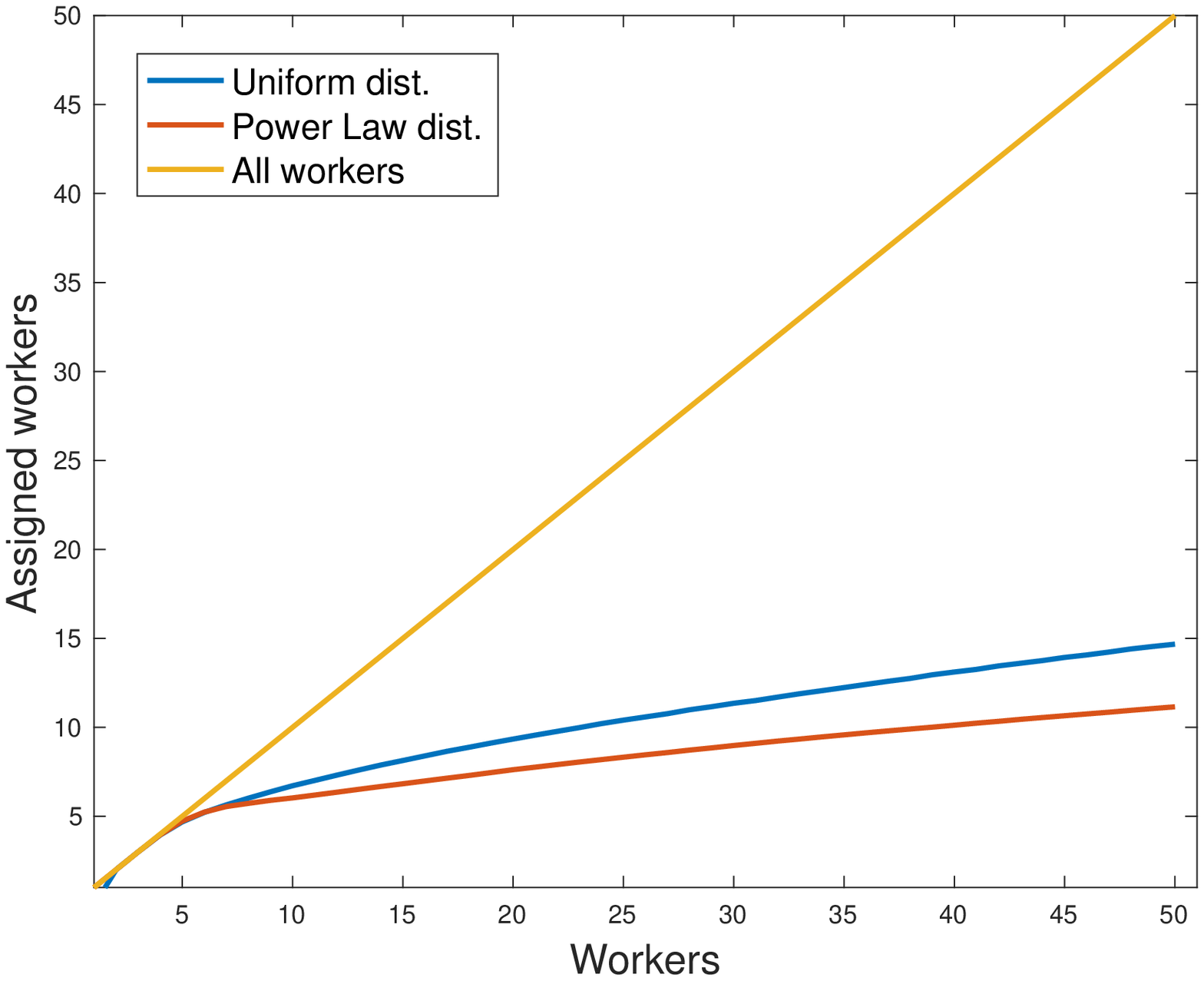}
		}
		\caption{Assigned workers}
		\label{subfig:deterministicass}
	\end{subfigure}\hfil
	\caption{Deterministic assignment - Workers' availability}
	\label{fig:Deterministic}
\end{figure}
}

\subsection{Experiments}

We now experimentally consider two questions associated with our problem: 1) what is the impact of the distribution of worker proficiencies on the requester's utility and the number of workers assigned, and 2) what is the difference between optimal randomized and deterministic assignment.
We sample worker proficiencies using two different distributions from which the workers' proficiencies were sampled: a uniform distribution over the $[0.5,1]$ interval, and a power law distribution with $k=0.5$ in which proficiencies are truncated to be in this interval.
We use 100 tasks, unless stated otherwise, and vary the number of workers between 2 and 20.\footnote{When we vary the number of workers, we generate proficiencies incrementally, adding a single worker with a randomly generated proficiency each time.}
%or the case of $ i+1 $ workers, the same $ i $ workers as before were chosen along with one new worker.}
For each experiment, we take an average of 20,000 sample runs.
%. Both distributions were truncated such that the sampled proficiency values are between $ 0.5 $ and $ 1 $. 
%We setup multiple environments, all with $ 100 $ tasks where each of the environments differ by the number of workers ( $ 2 - 50 $ ). 
%For each tuple of environment and distribution, we extract the defender's expected utility, averaged over $ 20,000 $ runs, each with a different workers' proficiencies.\footnote{For the case of $ i+1 $ workers, the same $ i $ workers as before were chosen along with one new worker.}

%% distributions
\begin{figure}[htb]
	\centering
	\begin{subfigure}{0.5\textwidth}
		\centering
		%\resizebox{\height}{4cm}{
			\includegraphics[width=1\linewidth]{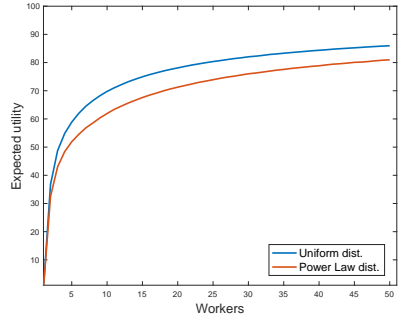}
	%	}
		\caption{Utility (randomized).}%Available workers' effect on the defender's expected utility}
		\label{subfig:randomizedutility}
	\end{subfigure}%
	\begin{subfigure}{0.5\textwidth}
		\centering
		%\resizebox{\height}{4cm}{
		\includegraphics[width=1\linewidth]{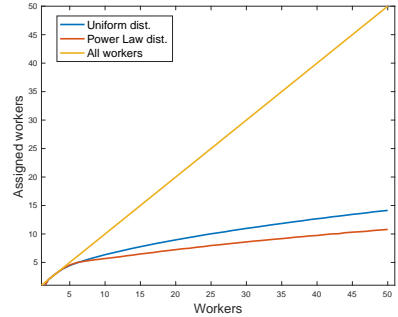}
		%}
		\caption{Assignment (randomized).}
		\label{subfig:randomizedass}
	\end{subfigure}
	\begin{subfigure}{0.5\textwidth}
		\centering
		%\resizebox{\height}{4cm}{
			\includegraphics[width=1\linewidth]{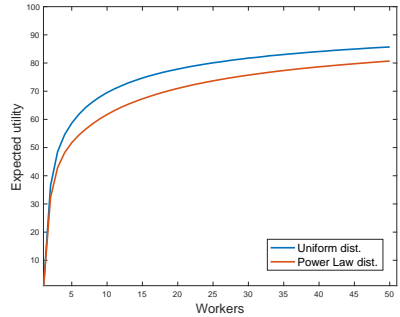}
		%}
		\caption{Utility (deterministic).}%Available workers' effect on the defender's expected utility}
		\label{subfig:deterministicutility}
	\end{subfigure}%
	\begin{subfigure}{0.5\textwidth}
		\centering
		%\resizebox{\height}{4cm}{
			\includegraphics[width=1\linewidth]{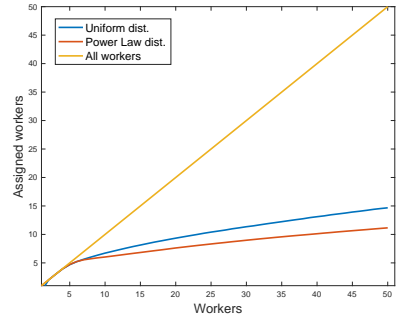}
		%}
		\caption{Assignment (deterministic).}
		\label{subfig:deterministicass}
	\end{subfigure}\hfil
	\caption{Comparison between uniform and power law distributions of worker proficiencies.}
	\label{fig:distribution}
\end{figure}
In Figure~\ref{fig:distribution} we compare the uniform and power law distributions in terms of the expected defender utility and the number of workers assigned any tasks in an optimal randomized (Figure~\ref{fig:distribution}a-b) and deterministic (Figure~\ref{fig:distribution}c-d) assignments.
Consistently, under the power law distribution of proficiencies, the defender's utility is lower, and fewer workers are assigned tasks in an optimal assignment.

Next, we experimentally compare the optimal randomized and deterministic assignments in terms of (a) defender's utility, and (b) the number of workers assigned tasks.
In this case, we only show the results for the uniform distribution over worker proficiencies.

It is, of course, well known that the optimal randomized assignment must be at least as good as deterministic (which is a special case), but the key question is by how much.
As Figure~\ref{fig:ratio} shows, the difference is quite small: always below 3\%, and decreasing as we increase the number of tasks from 20 to 200.
Similarly, Figure~\ref{fig:assignmentcomp} suggests that the actual policies are not so different in nature: roughly the same number of most proficient workers are assigned to tasks in both cases.
\begin{figure}
	\centering
	\begin{subfigure}{0.5\textwidth}
		\centering
                %\resizebox{\height}{4cm}{
                  \includegraphics[width=1\textwidth]{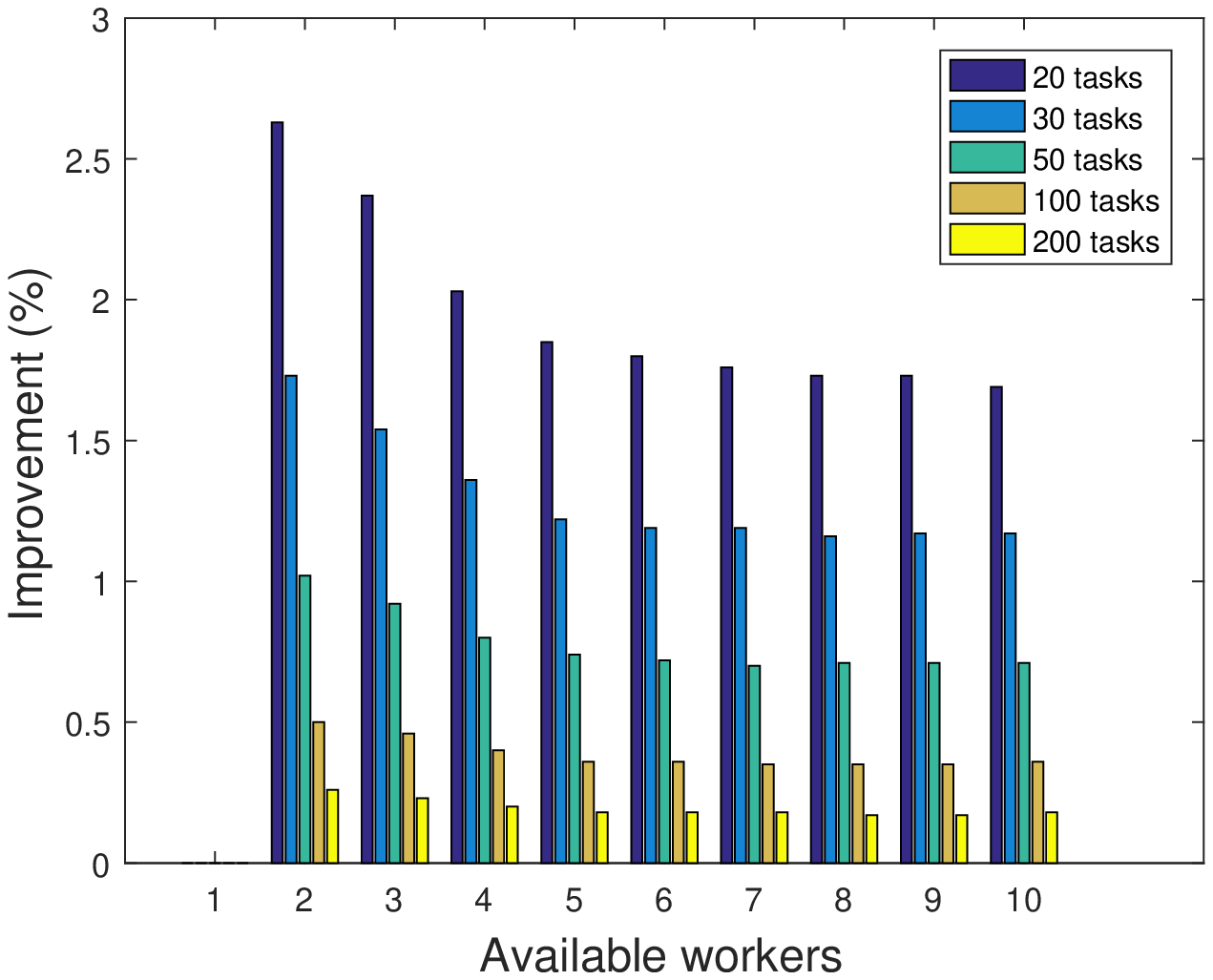}
%}
\caption{Improvement ratio - \\Randomized vs. Deterministic.}
\label{fig:ratio}
	\end{subfigure}%
	\begin{subfigure}{0.5\textwidth}
		\centering
		%\resizebox{\height}{4cm}{
		\includegraphics[width=1\textwidth]{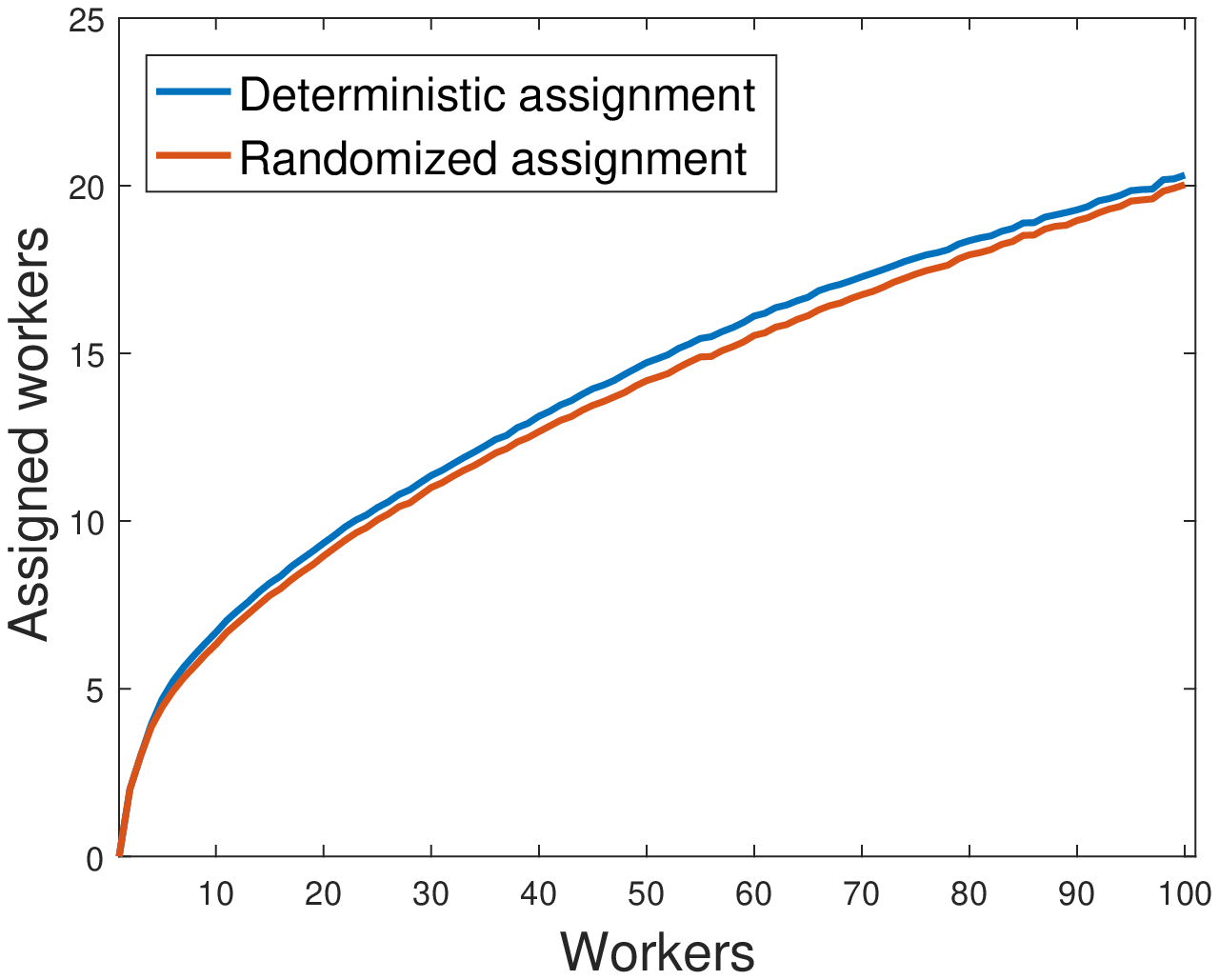}%}
		\caption{Assigned workers}
		\label{fig:assignmentcomp}
	\end{subfigure}
	\caption{Assignments' comparison}
\end{figure}
The implication of this observation is that from the defender's perspective it is not crucial to know precisely what the adversary observes about the assignment policy: one can safely use the optimal deterministic policy, which is near-optimal even when the attacker only observes the policy and not the actual assignment.
On the other hand, the deterministic assignment is much more robust: in a randomized assignment, if the attacker actually observes the worker which tasks are assigned to, the defender will receive zero utility, whereas an optimal deterministic assignment can achieve a near-optimal utility in any case.

%We evaluate the performance of our suggested allocation strategies in simulated environments. Specifically, we know that the deterministic is a special case of the randomized one and are interested in the magnitude of improvement by not committing to a specific allocation.  %are interested in analyzing which strategy performs better (i.e., results in an higher expected utility) given the different environment parameters' effect (e.g., number of workers, $ n $, and the number of tasks, $ m $). 
%For each combination of $ n $ and $ m $ tested throughout the simulations, we randomly generated $ n $ different workers whose proficiency is sampled from a uniform distribution ($ U[0.5,1] $) for $ 20,000 $ random environments and extracted the defender's expected utility for each of the strategies.
%We begin by evaluating the effect of workers' availability to the defender on her expected-utility given $ m = 100 $ tasks and change the number of workers between $ n=2 $ to $ n=50 $.
%We used the data sampled from the Uniform distribution that was generated for Figures~\ref{fig:Randomized} and \ref{fig:Deterministic} to compared the expected utility for the two strategies while alternating the number of available workers.% as depicted in Figure~\ref{fig:NumOfWorkers}.

\leaveout{
\begin{figure}
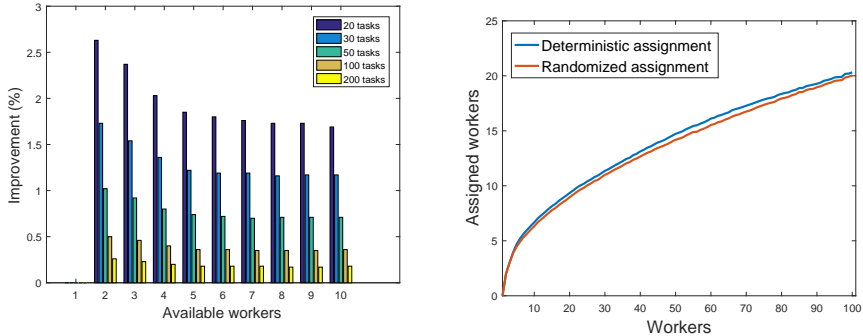

	\begin{subfigure}{0.25\textwidth}
		\centering
		\resizebox{\height}{4cm}{
			\includegraphics[width=1.3\textwidth]{100tasks}
		}
		\caption{Expected utility} \label{fig:NumOfWorkers}
	\end{subfigure}%
	\begin{subfigure}{0.25\textwidth}
		\centering
		\resizebox{\height}{4cm}{
		\includegraphics[width=1.3\textwidth]{assignmentComp}}
		\caption{Assigned workers}
		\label{fig:assignmentcomp}
	\end{subfigure}
	\caption{Assignments' comparison}
\end{figure}
%As depicted in Figure~\ref{fig:NumOfWorkers}, even when the defender has the flexibility of not committing to a specific allocation, she will not result in a significant improvement in her expected utility. To further analyze the magnitude of possible improvement, we conduct a set of similar experiments, for environments with 20, 30, 50, 100 and 200 tasks while changing the number of workers from $ 2 $ to $ 10 $. For each tuple ($ m, n$), we extract the percentage of improvement by using Randomized assignment rather than a Deterministic one. 
%As depicted in Figure~\ref{fig:ratio}, as the number of tasks increase, the improvement of using the Randomized assignment over the Deterministic one decreases, respectively. In a similar manner, as the number of workers increases, this improvement decreases as well.
\begin{figure}
	\centering
	\resizebox{\height}{4cm}{
	\includegraphics[width=1.2\linewidth]{Ratio}
}
	\caption{Improvement ratio - Randomized vs. Deterministic}
	\label{fig:ratio}
\end{figure}
%\begin{figure}
%	\centering
%	\includegraphics[width=0.45\textwidth]{assignmentComp}
%	\caption{}
%	\label{fig:assignmentcomp}
%\end{figure}
While the randomized assignment is promised to result in an higher expected utility, another factor to one should look at is the number of workers that are being assigned in each strategy. As depicted in Figure~\ref{fig:assignmentcomp}, which represents the number of assigned workers for our simulations using the Uniform distribution, the deterministic strategy results in assigning more workers in each case.
\begin{figure}
	\centering
	\begin{subfigure}{0.25\textwidth}
		\centering
		\resizebox{\height}{4cm}{
			\includegraphics[width=1.4\textwidth]{5workers}
		}
		\caption{5 workers} \label{fig:5workers}
	\end{subfigure}%
	\begin{subfigure}{0.25\textwidth}
		\centering
		\resizebox{\height}{4cm}{
			\includegraphics[width=1.4\textwidth]{20workers}
		}
		\caption{20 workers}	\label{fig:20workers}
	\end{subfigure}
	\caption{The effect of tasks' availability on the defender's expected-utility} \label{fig:NumOfTasks}
\end{figure}		
Next, in Figure~\ref{fig:NumOfTasks}, we compare the per-task utility using the two strategies given different population sized: $ 5 $ workers (Figure~\ref{fig:5workers}) and $ 20 $  workers (Figure~\ref{fig:20workers}). For each case, we extract the defender's expected-utility when changing the number of available tasks from $ 20 $ to $ 50 $. As depicted in these figures, we observe a similar trend as in Figure~\ref{fig:NumOfWorkers}, that there is not much improvement by using the randomized strategy over the deterministic one.
}
\section{Heterogeneous tasks}

%As discussed earlier in this paper, when tasks' utilities are (even slightly) heterogeneous (i.e., have different utilities), the defender may find it beneficial to assign multiple workers per task. 

It turns out that the more general problem in which utilities are heterogeneous is considerably more challenging than the case of homogeneous allocation.
%For the heterogeneous case, 
First, we show that even if the tasks' utilities slightly different, it may be beneficial to assign the same task to multiple workers. Consider the case of an environment populated with $ 2 $ workers and $ 2 $ tasks. WLOG, we order the tasks by their utility, i.e., $ u_{t_1} < u_{t_2} $. Regardless of the workers' proficiencies, assigning one worker per task will result in an expected utility of $ \min (p_iu_{t_1}, p_ju_{t_2}) $. Still, assigning both worker to $ t_2 $ will result in an expected utility of $ \min (p_iu_{t_2},p_ju_{t_2}) $ which is promised to be higher than the previous one.
%the defender should allocate both workers to $ t_2 $ since it is promised that a single worker being assigned to this worker will be attacked.
%This finding introduce another set of defender's decision rules to her optimization problem. Given 2 or more workers that provide with a set of labels, the defender should decide what label to adopt. We relax this problem and look at the case when the defender is restricted to only assigning one worker per task.
%\cheni{Provide otivation.} 
%To mitigate the resulting complexity of the problem, we nevertheless consider a restricted space of defender strategies where only a single worker is assigned to any task.
Aside from the considerably greater complexity challenge associated in solving problems with heterogeneous utilities alluded to in this example, there is an additional challenge of resolving disagreement among workers, particularly when there are an even number of them.
We leave this issue for future work, and for the moment tackle a restricted problem in which the defender nevertheless assigns only a single worker per task.

\subsection{Randomized strategy}

%Note that based on Proposition~\ref{lem:rep}, when a single worker is assigned per task, it is optimal for the defender to use allocations in which a single worker is assigned to all tasks. 
%Hence, in our settings, the defender should still use the allocation procedure described in Algorithm~\ref{alg:DoSRand} regardless of the tasks' utilities.
%We start by providing with Proposition~\ref{} which is a convention of Proposition~\ref{prop:rep} for the heterogeneous case. Note that given the possibility to span the space using unit vector and by restricting the assignment to one worker per task, the result given in Proposition~\ref{prop:optRand} regarding the structure of an optimized randomized assignment stands for the heterogeneous case as well. Hence, since Proposition~\ref{prop:balance} that is not affected by the tasks' utilities (i.e.,  based on the fact that each allocation assigns all tasks to a single worker) we can use Algorithm~\ref{alg:DoSRand} to extract the optimal restricted randomized allocation under heterogeneous tasks' utilities.

If we assume that a single worker is assigned to each task, it turns out that we can apply Algorithm~\ref{alg:DoSRand} directly in the case of randomized assignment as well.
To show this, we need to extend Proposition~\ref{prop:rep} to the heterogeneous assignment case; the remaining propositions, with the provision that one worker is assigned per task, do not rely on the fact that tasks are homogeneous and can be extended with minor modifications.
To this end, let $s_{wt}$ be a binary variable which is 1 iff a worker $w$ is assigned to task $t$.
From our assumption, $\sum_{w} s_{wt} = 1$ for each $t$ (since the budget constraint is $m$, we would assign a worker to each task).
Further, define $U = \sum_{t\in T} u_t$.

\begin{prop}\label{prop:heterogeneousRandom}
	Suppose tasks are heterogeneous and one worker is assigned to each task.
Then for any distribution over assignments $q$ and attack strategy $\alpha$, there exists a distribution over $\tilde{S}$, $\lambda$, which results in the same utility.
\end{prop}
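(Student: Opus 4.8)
The plan is to mirror the argument of Proposition~\ref{prop:rep}, but with the task-count bookkeeping replaced by utility-weighted bookkeeping. First I would fix the attacker strategy $\alpha$ and write the defender's expected utility under an arbitrary distribution $q$ over single-worker-per-task assignments. Since worker $w$ contributes $p_w u_t$ for each task $t$ assigned to it when it is not attacked (and $0$ when it is), this is
\[
u_{def}(q,\alpha) = \sum_{w \in W} p_w (1-\alpha_w) \sum_{s} q_s \sum_{t \in T} s_{wt} u_t .
\]
On the other side, a unit assignment $m e_w \in \tilde{S}$ sends every task to worker $w$, so it yields total task utility $U$ scaled by $p_w$ when $w$ is unattacked; hence for a distribution $\lambda$ over $\tilde{S}$,
\[
u_{def}(\lambda,\alpha) = U \sum_{w \in W} p_w (1-\alpha_w) \lambda_w .
\]

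The central step is choosing the right $\lambda$. I would define
\[
\lambda_w = \frac{1}{U} \sum_{s} q_s \sum_{t \in T} s_{wt} u_t ,
\]
so that $U\lambda_w$ equals the expected utility-mass accruing to worker $w$ under $q$. Substituting this into $u_{def}(\lambda,\alpha)$ reproduces $u_{def}(q,\alpha)$ term by term, so the two strategies give identical defender utility against the fixed $\alpha$; since $\alpha$ was arbitrary, the equality holds for every attack.

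It then remains to check that $\lambda$ is a legitimate distribution. Non-negativity is immediate since $q_s, u_t \ge 0$ and $U > 0$. For normalization I would swap the order of summation and invoke the single-worker-per-task constraint $\sum_{w} s_{wt} = 1$:
\[
\sum_{w \in W} \lambda_w = \frac{1}{U} \sum_{s} q_s \sum_{t \in T} u_t \sum_{w \in W} s_{wt} = \frac{1}{U} \sum_{s} q_s \sum_{t \in T} u_t = \frac{1}{U} \sum_{s} q_s \, U = 1 ,
\]
using $\sum_{t} u_t = U$ and that $q$ is a probability distribution. This completes the construction.

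The argument is essentially routine once the definition is fixed, and I do not expect a genuine obstacle; the only point requiring care—and the place where heterogeneity actually enters—is that $\lambda_w$ must be weighted by the task utilities $u_t$ and normalized by $U$, rather than counting tasks and dividing by $m$ as in the homogeneous Proposition~\ref{prop:rep}. Getting this weighting right is exactly what makes both the utility-matching and the normalization $\sum_w \lambda_w = 1$ go through, the latter relying critically on the assumption that exactly one worker is assigned to each task.
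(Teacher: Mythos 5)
Your proposal is correct and matches the paper's own proof essentially line for line: you fix $\alpha$, write the same two utility expressions, define the identical utility-weighted marginal $\lambda_w = \frac{1}{U}\sum_{s} q_s \sum_{t\in T} s_{wt} u_t$, and verify $\sum_{w\in W}\lambda_w = 1$ via the single-worker-per-task constraint $\sum_{w} s_{wt}=1$, exactly as the paper does. Your added non-negativity check is a harmless (and slightly more careful) supplement.
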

\begin{proof}
Fix an attacker strategy $\alpha$, and let $S$ the set of assignments $s$ in which a single worker is assigned to each task.
For any probability distribution $q$ over assignments $s \in S$, the expected utility of the defender is $u_{def}(q,\alpha) = \sum_{w\in W} p_w (1-\alpha_w) \sum_{s \in S} q_s \sum_{t\in T}s_{wt}u_t$.
%, where $ s_{wt} $ is a binary indicator which is 1 iff worker $w$ is assigned to task $t$ under assignment $ s $.
The expected utility of the defender for a distribution $\lambda$ over $\tilde{S}$ (i.e., over workers) is $u_{def}(\lambda,\alpha) = \sum_{t\in T}u_t\sum_{w\in W} p_w (1-\alpha_w) \lambda_w = U \sum_{w\in W} p_w (1-\alpha_w) \lambda_w$.
Define $\lambda_w = \frac{\sum_{s \in S} q_s \sum_{t\in T}s_{wt}u_t}{U}$.
It suffices to show that $\sum_{w\in W} \lambda_w = 1$.
This follows since $\frac{\sum_{s \in S} q_s \sum_{t\in T}\sum_{w\in W}s_{wt}u_t}{U} = \frac{U\sum_{s \in S} q_s}{U} = 1$ because $ \sum_{w\in W}s_{wt}  = 1 $ and $q$ is a probability distribution.
\end{proof}
Thus, if we constrain the defender to use a single worker per task, we can randomize over workers, rather than full assignments, allowing us to compute a (restricted) optimal randomized assignment in linear time.

\subsection{Deterministic strategy}

We now show that the defender's deterministic allocation problem, denoted \textit{Heterogeneous tasks’ deterministic assignment (HTDA)}, is NP-hard even if we restrict the strategies to assign only a single worker per task.
\begin{prop}
	HTDA is strongly NP-hard even when we assign only one worker per task.
\end{prop}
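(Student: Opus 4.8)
The plan is to reduce from \textsc{3-Partition}, which is strongly NP-hard, exploiting the fact that the single-worker-per-task restriction turns the defender's problem into a load-balancing (makespan) problem. First I would reformulate the objective. Fix a deterministic assignment in which each task goes to exactly one worker, and for each worker $w$ let $V_w = p_w \sum_{t : s_{wt}=1} u_t$ be the value the attacker destroys by attacking $w$. Since the game is zero-sum and the attacker observes the assignment, a best response attacks a worker maximizing $V_w$, so the defender's equilibrium utility is $\sum_{w} V_w - \max_{w} V_w$. Leaving a task unassigned never helps: removing a task of utility $u_t$ from worker $w$ lowers $\sum_w V_w$ by $p_w u_t$ and lowers $\max_w V_w$ by at most the same amount, so the objective weakly decreases. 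Hence I may assume all $m$ tasks are assigned and that $\sum_w V_w$ is fixed once the proficiencies are fixed.

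The key idea is to make all proficiencies equal. Given a \textsc{3-Partition} instance with $3k$ positive integers $a_1,\dots,a_{3k}$ satisfying $\sum_i a_i = kB$ and $B/4 < a_i < B/2$, I construct an HTDA instance with $k$ workers, all of proficiency $p_w = 1/2$ (any common value in $[1/2,1]$ works), and $3k$ tasks with utilities $u_t = a_t$. With equal proficiencies and all tasks assigned, $\sum_w V_w = \tfrac{1}{2}\sum_t u_t = \tfrac{1}{2}kB$ is a constant, so maximizing $\sum_w V_w - \max_w V_w$ is equivalent to minimizing $\max_w V_w$, i.e., minimizing the makespan of scheduling the item sizes $a_t$ on $k$ identical machines.

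The correctness of the reduction then rests on the standard makespan equivalence. If the \textsc{3-Partition} instance is a yes-instance, assigning each triple to one worker gives every worker load exactly $B$, so $\max_w V_w = \tfrac{1}{2}B$ and the defender's utility is $\tfrac{1}{2}(k-1)B$. Conversely, since the total load $kB$ is split among $k$ workers, any assignment has maximum load at least $B$, with equality only when every worker carries load exactly $B$; because $B/4 < a_i < B/2$ forces each such worker to hold exactly three items, a uniform load-$B$ assignment is precisely a valid $3$-partition. Thus the defender can attain utility at least $\tfrac{1}{2}(k-1)B$ if and only if the instance is a yes-instance, so deciding the defender's optimum solves \textsc{3-Partition}. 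As all utilities equal the (polynomially bounded) integers $a_i$, the reduction establishes strong NP-hardness.

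I would expect the main obstacle to be the reformulation together with the two WLOG reductions — arguing cleanly that the attacker destroys the maximum-value worker and that unassigned tasks never help — rather than the combinatorial core, which is essentially the textbook reduction from \textsc{3-Partition} to minimum makespan. A secondary point to handle carefully is that the single-worker-per-task restriction (each task gets one worker, though a worker may still receive several tasks) is exactly the structure that makes the makespan correspondence go through.
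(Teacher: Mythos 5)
Your proof is correct, but it takes a genuinely different route from the paper. The paper reduces from bin packing: it keeps all proficiencies equal, maps each item to a task whose utility is its size, and adds an \emph{anchor} gadget --- one extra task of utility $\gamma$ (the bin volume) and one extra worker --- so that in a yes-instance the attacker destroys exactly the anchor's value $\gamma$, and the defender retains at least the total item value if and only if every worker's load can be held to at most $\gamma$, i.e., iff the items pack into the given bins. You instead reduce from \textsc{3-Partition} and dispense with any gadget: after observing that equal proficiencies make the total destroyed-plus-retained value $\tfrac{1}{2}\sum_t u_t$ a constant, maximizing the defender's utility $\sum_w V_w - \max_w V_w$ becomes exactly minimum makespan on identical machines, and the window $B/4 < a_i < B/2$ forces any makespan-$B$ schedule to be a perfect triple partition. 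Both arguments hinge on the same structural move --- equal proficiencies collapse HTDA to pure load balancing --- but your version buys a cleaner and more self-contained correctness proof (the averaging bound $\max_w V_w \ge \tfrac{1}{2}B$ with rigidity at equality replaces the paper's somewhat delicate, and in places garbled, accounting of $V$, $\gamma$, and $\gamma^*$), while the paper's anchor trick buys a more direct encoding of the capacity constraint at the cost of an extra worker and task. Your preliminary reformulation of the objective and the check that unassigned tasks never help (moot under the constraint $\sum_w s_{wt}=1$, but worth stating) are also handled correctly, and strong NP-hardness transfers since the constructed utilities are the polynomially bounded integers $a_i$.
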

\begin{proof}
 We reduce from the \textit{Bin packing problem (BP)}, which is a strongly NP-Hard problem. In the bin packing problem, objects of different volumes must be packed into a finite number of bins or containers each of volume $ V $ in a way that minimizes the number of bins used. We define the set of items $ I = \{i_1,\ldots,i_m\} $ with sizes $ \varsigma = \{\varsigma_1,\ldots,\varsigma_n\} $, the volume as $ V = \gamma $, and the set of containers as $ C=\{c_1,\ldots,c_n\} $. The decision problem is deciding if objects will fit into a specified number of bins. Our transformation maps the items to $ m+1 $ tasks with the following utilities $ \{\varsigma_1, \ldots, \varsigma_n, \gamma\} $, the $ n $ containers to $ n+1 $ workers while considering the private case where all the workers have the same proficiency (i.e., $ p_i = p_j $, $ \forall w_i,w_j \in W $). If we started with a YES instance for the BP problem, there is an assignment of items to containers under the volume $ \gamma $. Let $ \mathcal{A} $ be that assignment. Then if $ \mathcal{A}(t_i) = j $, we assign task $ t_i $ to worker $ j $ on HTDA. Also, we assign task $ t_{m+1} $ (with utility $ \gamma $) to worker $ n+1 $. The utility of this task assignment is: $ V + \gamma - \gamma = V $. For the case that with a NO instance for the BP problem, assume in negation that this is a YES instance for the HTDA problem. I.e., there exists an assignment in HTDA such that $ V^* - \gamma^* \geq V $, where $ V^* = \sum_{w_i\in W}\sum_{t\in T} x_{it} u_t  p_{i} $ and $ \gamma^* = \max(\sum_{t\in T} x_{it} u_t  p_{i}) $.  This  implies that $ V + \gamma \geq V^* $. Substituting $ V^* = V+\gamma^* $, we get that  $ V + \gamma - \gamma^* \geq V $, hence, $ \gamma \geq \gamma^*  \geq \sum_{t\in T} x_{it} u_t  p_{i} $, $ \forall w_i \in W $. Note that this contradicts the assumption that this is a YES instance for the HTDA problem. The reduction can clearly be performed in polynomial time. 
%Since multiple subset-sum problem is NP-hard, it follows that HDTA is NP-hard.
\end{proof} 

We propose the following integer program for computing the optimal deterministic strategy for the defender (assuming only one worker is assigned per task):
\begin{subequations}
	\label{E:deterministicHet}
	\begin{align}
	& \max_{s,\gamma} \displaystyle\sum\limits_{w\in W}\sum\limits_{t\in T} s_{wt} u_t  p_{w} - \gamma  \label{objectiveDH}\\
	& s.t.: \displaystyle\sum\limits_{w\in W}\sum\limits_{t\in T} s_{wt} = m \label{cons:allTasksDH}\\
	%& \displaystyle\sum\limits_{w_i\in W} x_{it} = 1, \forall t\in T \label{cons:oneWorkerperTask}\\
	&\gamma \ge \sum\limits_{t\in T} s_{wt}u_t  p_{w}, \forall w \in W \label{cons:maxDH}\\
        &\sum_w s_{wt} = 1, \forall t\in T\label{cons:singleWorkerTask}\\
         &s_{wt} \in \{0,1\}.
	\end{align}
\end{subequations}
%where $ x_{wt} $ is a binary indicator which is 1 iff worker $w$ is assigned to task $t$. 
%This new variable is introduced since in contrast to the homogeneous case, where the defender only cares about the number of tasks allocated to each worker, in this case she also has to determine which specific tasks to allocate to each worker. 
The objective (\ref{objectiveDH}) aims to maximize the defender's expected utility given the adversary's attack (second term). Constraint (\ref{cons:allTasksDH}) ensures that each allocation assigns all the possible tasks among the different workers and Constraint (\ref{cons:maxDH}) validates that the adversary's target is the worker who contributes the most to the defender's expected utility.
Finally, Constraint~\eqref{cons:singleWorkerTask} ensures that only one worker is assigned for each task.

%Furthermore, It has been shown that the generalization does not have a fully polynomial time approximation scheme~\cite{caprara2000multiple}. Hence, HDTA does not have a fully polynomial time approximation scheme as well.

%In contrast to the randomized strategy, when a defender that uses a deterministic strategy is restricted to assign only one worker to each task, her optimization problem is extended to the following one: %expected utility for the heterogeneous case is defined as:
%Next, we present an integer programming formulation that computes the optimal deterministic strategy for the defender when tasks are heterogeneous, and the defender can only assign a single worker per task.

%We can extend the Linear Program~\eqref{E:randomizedHet} to solve the deterministic problem by simply enforcing integrality constraints on $x_{wt}$, turning it into an integer program for this case.

%Due to the fact that our problem is inapproximable, we solve it using integer programming. We used CPLEX version 12.51 to solve the HTDA  optimization problem described. The simulations were run on a 2.4GHz hyperthreaded 8-core Ubuntu Linux machine with 16 GB RAM.
\subsection{Experiments}

%Our final analysis targets the the resulted expected utility given heterogeneous tasks. %As discussed in the previous sections, the problem of extracting the optimal assignment for each of the strategies is hard. Thus, 
This analysis compares the defender's expected utility while using optimal randomized and deterministic strategies when we restrict that only one worker can be assigned to each task. 
%For the randomized strategy, this means following the exact same procedure as in the homogeneous case, while for the deterministic case, we have to solve the optimization problem formulated in Equation~\ref{E:determinicsticHet}. In order to compare the two solutions, we had to solve the deterministic problem (i.e., HTDA) using integer programming. We used CPLEX version 12.51. The simulations were run on a 3.4GHz hyperthreaded 8-core Windows machine with 16 GB RAM.
We used CPLEX version 12.51 to solve the linear and integer programs above. The simulations were run on a 3.4GHz hyperthreaded 8-core Windows machine with 16 GB RAM.
We generated utilities of different tasks using 6 different uniform distributions: \{[0,0.5],[0,1],[0,5],[0,10],[0,50],[0,100]\}, varied the number of workers between 2 and 15, and considered 15 tasks.
Worker proficiencies were again sampled from the uniform distribution over the [0.5,1] interval.
Results were averages of 1,000 simulation runs.
%and averaged the improvement in terms of the expected utility when using the randomized strategy compared to the deterministic one. Furthermore, each such point is the average of $ 1,000 $ simulations, each with a different workers’ proficiencies (sampled from U[0.5,1]) with $ 15 $ tasks. 

%Our analysis results in a supportive insights according to which the randomized strategy outperforms the deterministic one for each case tested. Specifically, 
%In each case, we found that the marginal improvement was below $0.3\%$.
\begin{figure}
	\centering
	%\resizebox{\height}{5.5cm}{
	\includegraphics[width=0.72\linewidth]{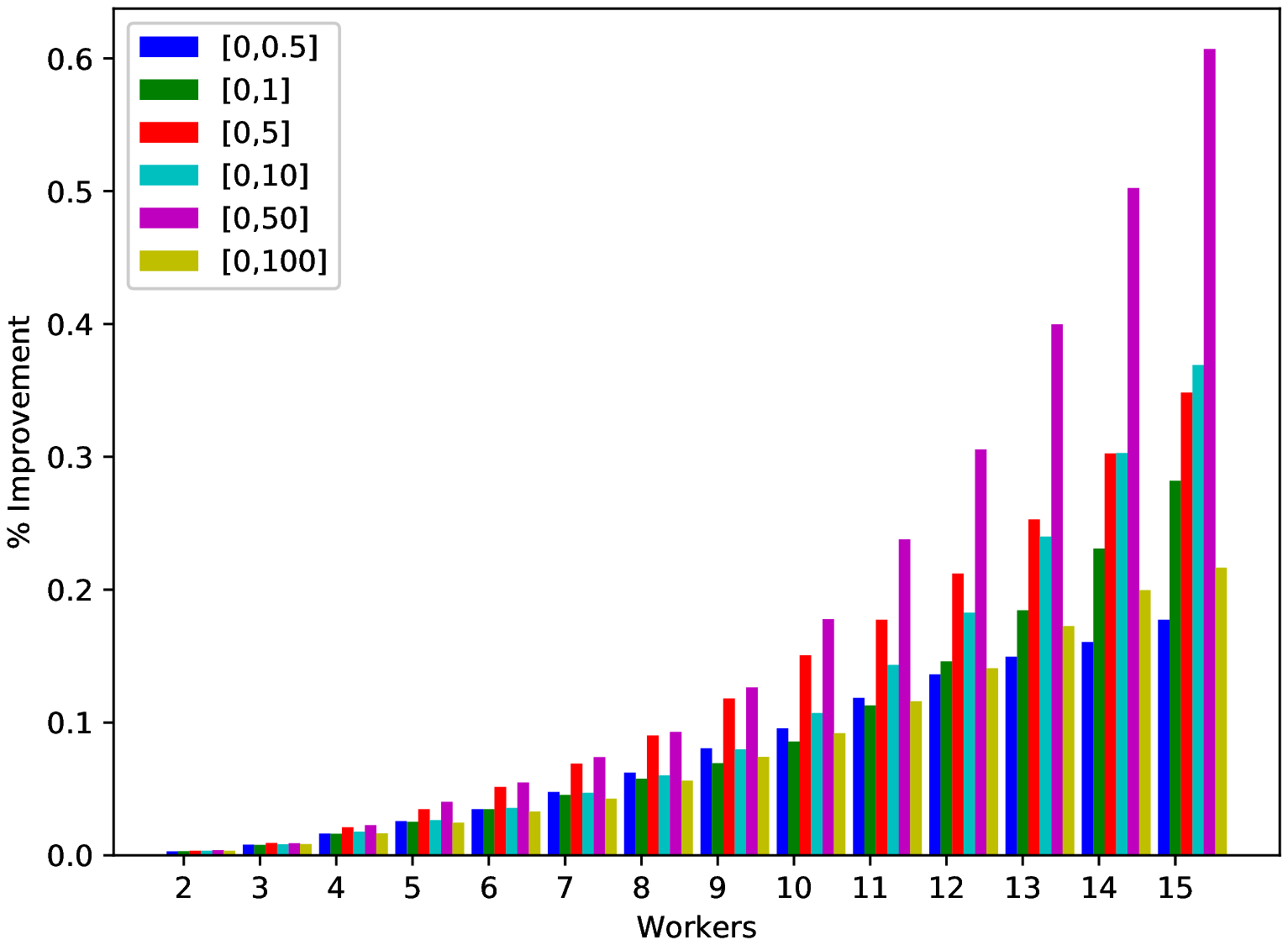}
%}
	\caption{Percentage difference between optimal randomized and deterministic allocations, when we constrain to assign one worker per task and tasks are heterogeneous.}
	\label{fig:compareHet}
\end{figure}
Figure~\ref{fig:compareHet} shows proportion difference between randomized and deterministic allocations for different numbers of workers and distributions from which task utilities are generated.
As we can observe, the difference is remarkably small: in all cases, the gain from using a randomized allocation is below 0.6\%, which is even smaller (by a large margin) than what we had observed in the context of homogeneous tasks.
However, there is an interesting difference we can observe from the homogeneous task setting: now increasing the number of workers considerably increases the advantage of the randomized allocation, whereas when tasks are homogeneous we saw the opposite trend.

%While the marginal improvement was under $ 0.3\% $ for each case, we observed that in contrast to the homogeneous case, in which an increase in the number of workers resulted in a decrease of the expected utility, in this case, an increase in the number of available workers increase the defender's expected utility. Note that this supporting result of the randomized strategy dominating over the deterministic one for heterogeneous tasks as well allows the defender to extract the better allocation in a polynomial time (compared to solving a NP-hard problem).
 % as depicted in Figure~\ref{fig:compareHet}. In this figure, each bar represents a different uniform distribution from which the tasks' utilities are drawn and each data point is the average improvement in terms of the expected utility when using the randomized strategy compared to the deterministic one. Furthermore, each such point is the average of $ 1,000 $ simulations, each with a different workers’ proficiencies (sampled from U[0.5,1]) with $ 15 $ tasks. An interesting insight of this figure is that in contrast to the homogeneous case, in which an increase in the number of workers resulted in a decrease in the expected utility, in this case, this increase results in the utility been increased. The supporting result of the randomized strategy dominating over the deterministic one in this case as well as the defender can extract to more beneficial allocation in a polynomial time (compared to solving the NP-hard problem).

\section{Discussion and Conclusions}

We consider the problem of assigning tasks to workers in an
adversarial setting when a worker can be attacked, and their ability
to successfully complete assigned tasks compromised.
In our model, since the defender obtains utility only from correctly
annotated tasks, the nature of the attack is less important; thus, the
attacker can compromise the integrity of the labels reported by the
worker, or simply prevent the worker from completing the tasks
assigned to them.
A key feature of our model is that the attack takes place after the
tasks have been assigned to workers, but has considerable structure in
that exactly one worker is attacked.
Additional structure is imposed by considering two settings: one in
which the attacker only observes the defender's (possibly randomized)
task allocation policy, and the other in which the actual task
assignment decision is known.
We show that the optimal randomized allocation problem in the former
setting (in the sense of Stackelberg equilibrium commitment) can be
found in linear time.
However, our algorithm for optimal deterministic commitment is pseudo-polynomial.
%, and the
%optimal deterministic allocation in the latter, can be computed in
%polynomial time.
Furthermore, when tasks are heterogeneous, we show that the problem is
more challenging, as it could be optimal to assign multiple workers to
the same task.
If we nevertheless constrain that only one worker is assigned per
task, we can still compute an optimal randomized commitment in linear
time, while deterministic commitment becomes strongly NP-Hard (we
exhibit an integer linear program for the latter problem).

\bibliographystyle{ieeetr}
\bibliography{adversarialC}
\end{document}